\documentclass[reqno]{amsart}
\usepackage[foot]{amsaddr}
\usepackage{graphicx}
\graphicspath{ {./figures/} }
\usepackage[margin=3cm]{geometry}
\usepackage{amsmath, amssymb,amsthm}
\usepackage[scr=rsfs]{mathalpha}
\usepackage[shortlabels]{enumitem}
\usepackage{mlmodern}
\DeclareSymbolFont{largesymbols}{OMX}{cmex}{m}{n} 
\usepackage{mathtools} 
\usepackage{stmaryrd} 
\usepackage{tikz}
\usetikzlibrary{decorations.pathreplacing,patterns,math,external,decorations.pathmorphing,arrows.meta}
\usepackage{upref} 
\usepackage[colorlinks]{hyperref}
\hypersetup{citecolor=blue,filecolor=blue,linkcolor=blue,urlcolor=navyblue}
\definecolor{navyblue}{rgb}{0.0, 0.0, 0.5}

\newtheorem{thm}{Theorem}[section]
\newtheorem{prop}[thm]{Proposition}
\newtheorem{lem}[thm]{Lemma}

\theoremstyle{definition}

\newtheorem{rmk}{Remark}[section]

\newtheorem*{claim*}{Claim}
\newtheorem*{fact*}{Useful fact}

\newcommand{\Z}{\mathbb{Z}}
\newcommand{\R}{\mathbb{R}}
\newcommand{\C}{\mathbb{C}}

\newcommand{\Sb}{\mathbb{S}}

\renewcommand{\Re}{\operatorname{\mathrm{Re}}}
\renewcommand{\mod}{\,\operatorname{mod}\,}

\newcommand{\oneb}{\mathbf{1}}
\renewcommand{\P}{\mathbf{P}}

\newcommand{\supp}{\operatorname{supp}}

\newcommand{\sgn}{\operatorname{sgn}}

\DeclareMathAlphabet{\mathcalboondox}{U}{BOONDOX-calo}{m}{n}

\newcommand{\psiout}{\psi_{\mathrm{out}}}

\newcommand{\SU}{\mathrm{SU}}

\newcommand\numberthis{\stepcounter{equation}\tag{\theequation}}
\numberwithin{equation}{section}

\newcommand\reds[1]{{\color{red}#1}}

\begin{document}

\title[]{Fast GHZ encoding with $1-o(1)$ fidelity}
\author{%
Laura Shou$^{1}$,
T. C. Mooney$^{1,2}$,
Twesh Upadhyaya$^{1,2}$,
Alexey V. Gorshkov$^{1,2}$
}

\address{\normalfont$^1$Joint Quantum Institute, Department of Physics, NIST/University of Maryland, College Park, MD 20742, USA}
\address{\normalfont$^2$Joint Center for Quantum Information and Computer Science,
NIST/University of Maryland, College Park, MD, 20742, USA}

\begin{abstract}
We prove that $(1-o(1))$-fidelity $N$-qubit Greenberger--Horne--Zeilinger (GHZ) encoding can be performed in time $O(\log N/N)$ using all-to-all 2-local Hamiltonians with bounded 2-qubit interaction terms. This saturates the theoretical lower bound $\Omega(\log N/N)$, and by rescaling, also saturates the lower bound on signaling time for Hamiltonians with power-law interaction strengths $1/r^\gamma$, $\gamma<d$ in dimension $d$. The protocol uses spin-squeezing dynamics combined with quantum signal processing.
\end{abstract}
\maketitle

\section{Introduction}

The $N$-qubit Greenberger--Horne--Zeilinger (GHZ) state $\frac{1}{\sqrt{2}}(|0^N\rangle+|1^N\rangle)$ \cite{greenberger1989going} and its relatives $\alpha|0^N\rangle+\beta|1^N\rangle$, $\alpha,\beta\ne0$, are entangled states with many applications in quantum information and computing, including to quantum metrology \cite{bollinger1996optimal}, quantum cryptography \cite{hillery1999quantum,murta2020quantum}, and quantum error correction \cite{gottesman2010introduction}.
Quantum information carried by an unknown single-qubit state $\alpha|0\rangle+\beta|1\rangle$ can be encoded into the entangled GHZ-like state $\alpha|0^N\rangle+\beta|1^N\rangle$ under unitary evolution, after which the initial quantum information is distributed throughout the system.
In this paper, we consider how fast one can perform GHZ encoding using all-to-all interacting Hamiltonians. By a simple scaling argument, this also answers the question for Hamiltonians with long-range power-law interaction strengths $1/r^\gamma$, $\gamma<d$ in dimension $d$. 
We consider all-to-all 2-local Hamiltonians with arbitrarily strong single-qubit terms and bounded 2-qubit interaction terms,
\begin{align}\label{eqn:hamiltonian}
H(t)=\sum_ih_i^a(t)X_i^a+\sum_{i<j}J_{ij}^{ab}(t)X_i^aX_j^b,\quad |J_{ij}^{ab}(t)|\le1,
\end{align}
for identity or Pauli operators $X_i^a,X_j^b$, and implicit sums over $a,b$.
For a system of qubits indexed by $\Lambda=\{0,1,\ldots,N\}$, we are interested in the GHZ encoding problem: Writing $\Lambda=\{q\}\cup  A$, we want to use a unitary operation $U$ corresponding to time evolution of the Hamiltonian \eqref{eqn:hamiltonian} to encode the single qubit state $\alpha|0\rangle_q+\beta|1\rangle_q$ into the GHZ subspace,
\begin{align}\label{eqn:encoding}
(\alpha|0\rangle+\beta|1\rangle)_q\otimes|\mathbf 0\rangle_{A}\overset{U}{\longmapsto}\alpha|\mathbf 0\rangle_\Lambda+\beta|\mathbf1\rangle_\Lambda,
\end{align}
up to a small error in the final state. 

As GHZ encoding spreads the initial quantum information throughout the system, and can also be undone to perform state transfer \cite{tran2021optimal}, the GHZ encoding time $T$ is restricted by Lieb--Robinson bounds \cite{lieb1972finite,hastings2006spectral,nachtergaele2010lieb,chen2023speed}. 
For short-range systems, the Lieb--Robinson bound gives a linear light cone for information propagation, and thus an encoding time lower bound of $T\ge\Omega(N^{1/d})$ in dimension $d$.
For long-range systems with power-law decay in the distance $r$, information may travel much faster \cite{tran2021optimal,chen2023speed,yin2025fast,Upadhyaya2026,Upadhyaya2026a}.
Here we consider the longest-range regimes: For all-to-all Hamiltonians, \cite{yin2025fast,chen2023speed} give the lower bound $T\ge\Omega(\frac{\log N}{N})$.
For the long-range power-law interactions decaying as $1/r^\gamma$ in the distance $r$, with $\gamma<d$ in dimension $d$,  \cite{guo2020signaling} gives the lower bound $T\ge\Omega(\frac{\log N}{N^{1-\gamma/d}})$. 
In this paper, we give a $(1-o(1))$-fidelity GHZ encoding protocol saturating these bounds. 

The first fast all-to-all protocol is \cite{yin2025fast}, which produces a numerically high-fidelity GHZ encoding in time $O((\log N)^2/N)$, nearly saturating the theoretical lower bounds. However, there is no rigorous guarantee for the fidelity of the protocol; as noted in \cite{yin2025fast}, the infidelity appears numerically non-vanishing, and may even increase slightly at large $N$ for fixed protocol parameters.
Therefore, to fully resolve the speed limit question in this regime, it remains to find a rigorously $1-\varepsilon$ or $1-o(1)$ fidelity fast protocol, as well as to obtain the optimal time scaling $O(\frac{\log N}{N})$.

We use a combination of fast squeezing operations and quantum signal processing to address both of these problems and prove 
\begin{thm}\label{thm:encoding}
GHZ encoding with $(1-o(1))$-fidelity can be performed using a Hamiltonian evolution of the form in \eqref{eqn:hamiltonian} in time $O(\log N/N)$.
\end{thm}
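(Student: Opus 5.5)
We will build the protocol in three stages. Each stage uses either collective spin operators on the register $A$ (with $|A|=N$) or interactions between $q$ and all of $A$. Every two-qubit coupling has strength $O(1)$, so a collective Hamiltonian such as $S_z^2/N$ or $Z_q S_x$ is effectively $N$ times faster than a single bond. This is the source of the $1/N$ factor in the time. Write $S_\mu=\frac12\sum_{i\in A}X_i^\mu$.

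\textbf{Stage 1 (fast squeezing).} Starting from $|\mathbf 0\rangle_A$, the spin-coherent state at the pole with total spin $N/2$, we apply the one-axis twisting Hamiltonian $\sum_{i<j\in A}X_i^zX_j^z\propto S_z^2$ and strong single-qubit rotations (twist-and-turn dynamics). The goal is to drive $A$ to a state whose $S_x$ distribution is broad, for example near-uniform or near-maximally anti-squeezed, with $S_x$ spread of order $N$. Under $S_z^2$ with unit couplings, the dynamics near the unstable fixed point grow exponentially at rate $\sim N$. Reaching spread $\sim N$ from initial width $\sqrt N$ therefore takes time $O(\log N/N)$. This step produces the $\log N$ factor.

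\textbf{Stage 2 (controlled rotation with quantum signal processing).} We couple $q$ to $A$ through $Z_q S_x$-type terms of strength $O(1)$ per bond, which act as a controlled collective rotation, interleaved with fast single-qubit rotations on $q$. Conditioned on $S_x=m$, the control qubit sees a phase $\theta_m\propto m t$. A QSP sequence of $O(1)$ steps, each of duration $O(1/N)$ or $O(\log N/N)$, implements a polynomial $P(\theta)$ that is a near-ideal "flip" over most of the relevant range of $m$. The design target is that, together with time-reversing Stage 1, the branch $|1\rangle_q$ maps the squeezed state of $A$ to one of large overlap with $|\mathbf 1\rangle_A$, while the branch $|0\rangle_q$ returns to $|\mathbf 0\rangle_A$. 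Concretely, we would aim for a conditional evolution $|0\rangle_q\otimes V|\mathbf 0\rangle+|1\rangle_q\otimes W V|\mathbf 0\rangle$, where $W$ approximates the global flip $\prod X_i$ on the support of $V|\mathbf 0\rangle$. This works because a parity-type flip acts as $e^{i\pi S_x}$ up to phase, and a broad $S_x$ distribution allows a low-degree QSP approximation of the branch-dependent phase.

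\textbf{Stage 3 (unsqueezing and error bound).} We undo Stage 1 in time $O(\log N/N)$. This uses the fact that $\prod_i X_i$ commutes with $S_z^2$, so it can be conjugated through. The remaining single-qubit correction on $q$ is instantaneous. We then bound the fidelity by $1-\sum_m p_m|P(\theta_m)-\text{target}|^2$ minus the squeezing approximation error, where $p_m$ is the $S_x$ distribution. We expect the main obstacle to be the quantitative control of Stage 1: we need rigorous tail bounds on $p_m$ after twisting, in the Holstein–Primakoff or large-spin semiclassical regime, showing that mass concentrates where the QSP polynomial error is $o(1)$. We would try a Gaussian/Holstein–Primakoff approximation with explicit error $O(t\cdot\mathrm{poly}\log N/\sqrt N)$, and choose the QSP degree growing slowly, for example like $\log\log N$, to make the total error $o(1)$ while keeping the total time $O(\log N/N)$.
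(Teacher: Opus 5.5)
\textbf{Stage 2 is the gap.} It cannot implement a controlled global flip in time $o(1)$. Your justification for it, that ``a broad $S_x$ distribution allows a low-degree QSP approximation of the branch-dependent phase,'' is backwards.

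With only $Z_qS_x$ couplings and rotations on $q$, the evolution is block diagonal in the $S_x$ eigenbasis. On $S_x=m$ it acts as a $2\times2$ unitary $Q(m)$ on $q$. Its entries are sums of exponentials $e^{i\omega m}$ with $|\omega|\le CT$, where $T$ is the total coupling time. These entries are bounded by $1$ for all real $m$, since $Q(m)$ is unitary there. Bernstein's inequality then gives $|Q_{11}(m+1)-Q_{11}(m)|\le CT$.

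Your target, however, is $W\approx\prod_{i\in A}X_i=(-i)^Ne^{i\pi S_x}$, i.e.\ $Q_{11}(m)\approx c\,(-1)^m$, which changes by $2$ between neighbouring $m$. Because $\prod_iX_i$ commutes with Stage 1 (you use this in Stage 3), $V|\mathbf 0\rangle$ has weight exactly $1/2$ in each parity sector. For the broad, near-uniform $S_x$ distribution you aim for, these sectors interleave at unit spacing. So the $|1\rangle_q$ branch cannot reach fidelity near $1$ unless $T=\Omega(1)$. This is exactly the $\Theta(1)$ cost of a controlled rotation with $\pm1$ control values that the paper points out.

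QSP with phases $mt$, $t=O(1/N)$, can only produce functions varying on the scale $\Delta m\sim N$, such as $\sgn(m)$ across a macroscopic gap, never a parity function. The only escape would be a Stage 1 that already separates the two parity components of $V|\mathbf 0\rangle$ macroscopically in $S_x$, i.e.\ a cat state. That is essentially the original problem, not something twist-and-turn squeezing to a broad distribution provides.

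\textbf{The missing idea} is to first convert the data qubit's value into the sign of a macroscopic observable with an $O(1)$ gap, and only then apply QSP, writing its output onto a fresh register that serves as a ``fat'' control.

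In the paper, $q$ performs only a tiny controlled rotation $e^{i\phi Z_qX/2}$ with $\phi=\kappa\log M/M$. This rotation is sandwiched between near-extreme TAT squeezing and reverse TAT. The squeezing gives width $\sigma_*=\kappa\log M/(Mg)$, so the rotation by $\phi=g\sigma_*$ already separates the packets. Reverse TAT then amplifies the separation to $Y\ge M/3$ versus $Y\le -M/3$ (Theorem~\ref{thm:tat}).

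QSP then only needs to approximate $\sgn(Y/M)$ on $|Y/M|\ge 1/3$. This takes degree $O(\log(1/\delta))$ with step time $\pi/(4M)$, and the result is written onto register $B$ rather than onto $q$. After $V_M^\dagger$ resets $A$, register $B$ in $|1^L\rangle$ has $Z_B=-L$. Hence $e^{-i\pi Z_BX/(4L)}$ flips $A$ in time $\pi/(4L)$.

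On the analysis you flagged as the main obstacle: Holstein--Primakoff cannot control packets at macroscopic $y$. The paper instead uses the exact continuum TAT flow $w\mapsto e^{2t}w$ in stereographic coordinates, together with Duhamel discrete--continuum bounds.
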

We give the explicit construction of $U$ and $H(t)$ in Section~\ref{subsec:protocol} and Theorem~\ref{thm:protocol}.
The time $O(\log N/N)$ saturates the lower bound of time $\Omega(\log N/N)$ from \cite{yin2025fast} for producing a $(1/2+\delta)$-fidelity GHZ encoding. By rescaling the $J_{ij}^{ab}$, this protocol also saturates the signaling lower bound $\Omega(\log N/(N^{1-\gamma/d}))$ from \cite{guo2020signaling} for long-range power-law interaction strengths $1/r^\gamma$, $\gamma<d$, in dimension $d$.
The protocol should also imply saturating GHZ encoding bounds with $k$-local Hamiltonians \cite{guo2020signaling,yin2025fastq}.

We briefly describe some of the ideas of the protocol here, with a more detailed description in Section~\ref{subsec:protocol} and Figure~\ref{fig:protocol}.
The protocol uses two qubit registers $A$ and $B$, so the system of qubits is $\Lambda=\{q\}\cup A\cup B$. On $A$, we start with two-axis twisting (TAT) and controlled $X$ rotation as in \cite{yin2025fast}. The idea there is to squeeze the packet, then use the data qubit $\alpha|0\rangle_q+\beta|1\rangle_q$ to perform a quick controlled $X$ rotation which separates the narrow packets, and then perform reverse TAT squeezing to send the packets far apart. Instead of proceeding with a refocusing procedure to physically produce an approximate GHZ encoding, we next use a quantum signal processing (QSP) method. The QSP method reads analog values from $A$, and writes $0$s or $1$s as appropriate to register $B$. After resetting the qubits in $A$, the $0$s or $1$s in $B$ can be used to copy the values back to $A$ quickly, which will produce a provably high fidelity GHZ encoding. Using the QSP method allows for taking higher TAT squeezing in the first step of the protocol, which allows for a shorter controlled $X$ rotation, which removes a factor of $\log N$ from the protocol time. The modified TAT-rotate-TAT method combined with the QSP method also allows for rigorously tracking the fidelity throughout the protocol.

\subsection{Spin-squeezing protocol of \texorpdfstring{\cite{yin2025fast}}{[Yin25]}}
Before describing the protocol in Theorem~\ref{thm:encoding} in more detail, we start by briefly reviewing parts of the protocol from \cite{yin2025fast}. The protocol applies a sequence of fast spin-squeezing dynamics to obtain a numerically accurate GHZ encoding in time $O((\log N)^2/N)$. The data qubit $\alpha|0\rangle+\beta|1\rangle$ is used as a control, so that applying a two-axis-twisting (TAT) procedure plus controlled rotation on the remaining qubits can produce two separate squeezed packets. Another TAT squeezing can then rapidly send these packets in opposite directions to opposite $Y$ poles of the Dicke sphere.
A refocusing procedure consisting of single-qubit rotations, one-axis-twisting (OAT), and another TAT unitary, then transforms the two squeezed packets into an approximate GHZ encoding $\alpha|0^N\rangle+\beta|1^N\rangle$.

We focus on the first part of the procedure, which is a 3-step TAT-rotate-TAT procedure which produces two packets at opposite $Y$ poles. 
Let $A$ index all the qubits except for the data qubit $q$. Define total spin components $X:=\sum_{i\in  A}X_i$, $Y:=\sum_{i\in A}Y_i$, and $Z:=\sum_{i\in A}Z_i$.
The 3-step procedure sends the initial state $\alpha|0\rangle_q|0^{N}\rangle_A+\beta|1\rangle_q|0^N\rangle_A$ to 
\begin{align}\label{eqn:tat0}
\alpha|0\rangle_q|\chi_+\rangle_A+\beta|1\rangle_q|\chi_-\rangle_A,
\end{align}
where $|\chi_\pm\rangle_A$ are squeezed states located near opposite $Y$ poles of the Dicke sphere.
The 3 steps are: TAT squeezing, controlled $X$ rotation, then reverse TAT squeezing, in total expressed by the unitary operation
\begin{align}\label{eqn:tatv}
V=e^{i\tau_2\frac{\log N}{N}2K}e^{i\frac{\phi}{2}Z_qX}e^{-i\tau_1\frac{\log N}{N}2K},
\end{align}
for $K:=\frac12(XY+YX)$ the TAT generator, and for $\tau_1,\tau_2=O(1)$ and $\phi=\theta(\log N)^2/N$, with $\tau_1,\tau_2,\theta$ to be optimized numerically.
The action of each of the three steps on the Dicke sphere is shown visually in the top half of Fig. 2(c) of \cite{yin2025fast}. For comparison with the new protocol, we also depict these steps in Figure~\ref{fig:bloch3}.
\begin{enumerate}[1.]
\item The first TAT unitary, $e^{-i\tau_1\frac{\log N}{N}2K}$, squeezes the state $|0\rangle_A$ to a state $|\eta\rangle_A$ which is narrow in $Y$ and long in $X$. The extreme squeezing time is $\tau_\mathrm{min}\approx1/8$, at which point $\Delta Y\sim1$; in continuum or normalized Dicke sphere coordinates $y\in[-1,1]$, this corresponds to $\Delta y\sim1/N$. However, taking extreme squeezing causes difficulties with the later refocusing procedure, so $\tau_1$ is taken below $\tau_\mathrm{min}$. 

\item The next step, $e^{i\frac{\phi}{2}Z_qX}$, is a controlled rotation in $X$. This rotates the two packets $\alpha|0\rangle_q|\eta\rangle_A$ and $\beta|1\rangle_q|\eta\rangle_A$ in opposite directions, to say $\alpha|0\rangle_q|\eta_+\rangle_A$ and $\beta|1\rangle_q|\eta_-\rangle_A$, respectively. Because $|\eta\rangle_A$ is narrow in $Y$, a small rotation in $X$ is enough to separate the state into two separate packets.
However, because $\tau_1$ is taken below $\tau_\mathrm{min}$, the packet is wider, and the rotation time $\phi$ has to be taken longer than the $O(\log N/N)$ that one would ideally like to take.

\item The final step, $e^{i\tau_2\frac{\log N}{N}(2K)}$, is reverse TAT squeezing. It stretches in the $Y$ direction. Since the two packets $|\eta_\pm\rangle_A$ are not centered in $Y$, this squeezing moves their centers apart, to near the $Y$ poles. The resulting state is \eqref{eqn:tat0}.
\end{enumerate}

\begin{figure}[htb]
\includegraphics[width=\textwidth]{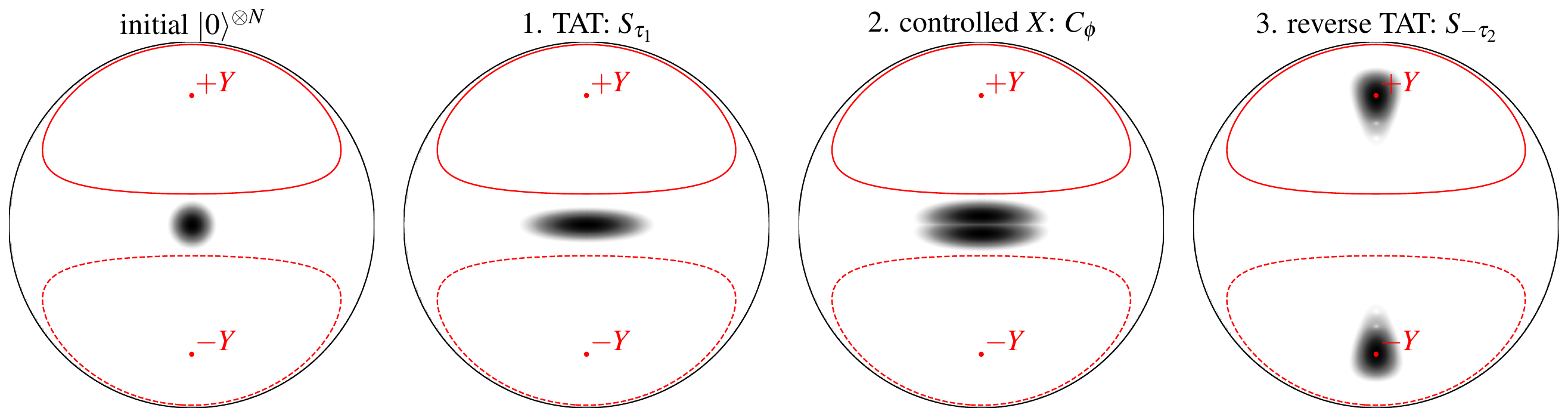}
\caption{Illustration of the first three steps (TAT, controlled $X$ rotation, reverse TAT) of the protocol in \cite{yin2025fast} on the Dicke sphere with $N=1000$. The Husimi distribution of the current state of each packet is shown overlayed in log scale, using a Lambert azimuthal projection of the sphere with the center being the $+Z$ pole.
Note that due to the log scale, the packets in the third panel after the controlled $X$ step are actually much more separated than they may appear.
While not relevant for the protocol in \cite{yin2025fast}, the red contours show the boundaries $y=\pm1/3$, to compare with Figure~\ref{fig:bloch}.
}\label{fig:bloch3}
\end{figure}

\begin{figure}[htb]
\includegraphics[width=\textwidth]{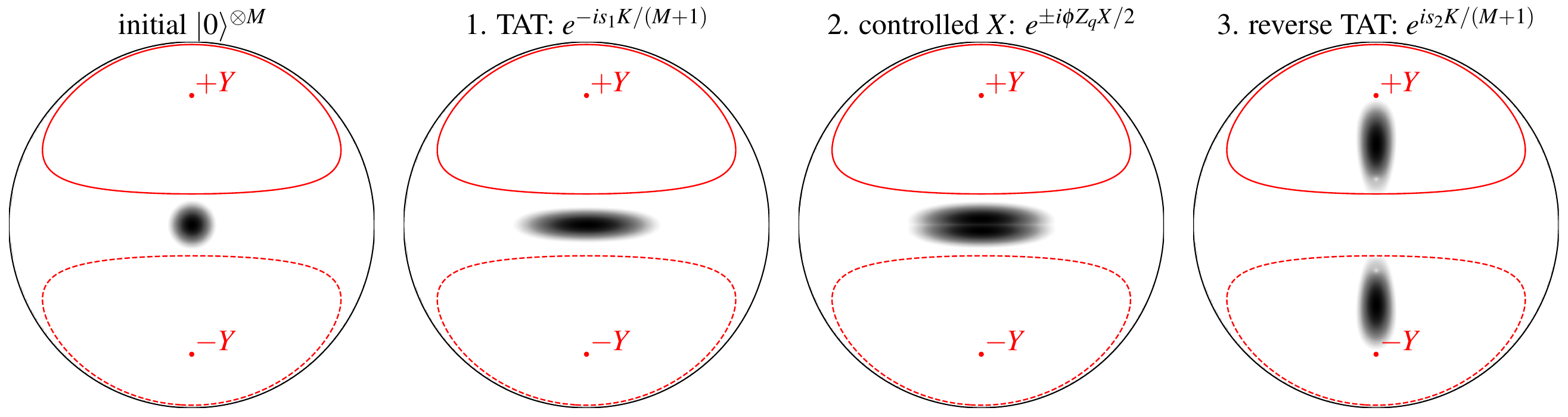}
\caption{
Illustration of the TAT packet separation in the new protocol, which consists of the same steps as the start of the protocol from \cite{yin2025fast} shown in Figure~\ref{fig:bloch3}, but with different squeezing and rotation parameters. The parameters used for the three steps are those in the table in Section~\ref{subsec:outline} with $a=9/20$, $\kappa=9$, and $M=1000$.
The red contours show the boundaries $y=\pm1/3$. In Step 1, we squeeze the state further than in Figure~\ref{fig:bloch3}, while in Step 2, we do a shorter controlled $X$ rotation. In Step 3, we squeeze a smaller amount to send the ``core'' of the packet past the $y=\pm1/3$ boundaries, but not all the way to the $\pm Y$ poles. Stopping before reaching the $\pm Y$ poles lets us handle the exact solution to the continuum limit TAT evolution equations, which allows for obtaining rigorous estimates.
Note that the packet densities (Husimi distributions) are shown in log scale, and so the packets in the controlled $X$ step are actually much more separated than they may appear.
}\label{fig:bloch}
\end{figure}

\subsection{New protocol with QSP} \label{subsec:protocol}

We use two qubit registers $A$ and $B$. On $A$, we perform a similar 3-step TAT-rotate-TAT procedure as described above. 
Recall the second half of the protocol of \cite{yin2025fast} is a refocusing procedure to transform the two squeezed packets in \eqref{eqn:tat0} into an approximate GHZ encoding.
We replace this refocusing procedure with a quantum signal processing (QSP) read/write method to register $B$, which will allow us to take close to extreme squeezing $\tau_1\approx\tau_{\mathrm{min}}$ in the first TAT squeezing on $A$, since there will be no need to control or refocus the resulting packets. Heuristically, on $A$, the packet after the first TAT squeezing will have width $\Delta y\sim O(1/N)$, so taking a shorter controlled $X$ rotation of duration $\phi=\Theta(\log N/N)$, which moves the packets a distance $\Theta(\log N/N)$ apart, is enough to separate into two separate packets. An illustration is shown in Figure~\ref{fig:bloch}.
In order to rigorously bound the fidelity of the final state, we will use a finite-difference relation and continuum evolution to keep track of quantitative error estimates.

\begin{figure}[htb]
\begin{tikzpicture}
\def\dl{.75cm}
\def\step{1.9*.75cm}
\node [above] at (0,0) {$\alpha|0\rangle_q{|0^M\rangle_A}|0^L\rangle_B+\beta|1\rangle_q{|0^M\rangle_A}|0^L\rangle_B$};
\draw[->] (0,0)--++(0,-\dl);
\node[right] at (0,-\dl/2) {$V_M$ TAT-rotate-TAT packet separation on $A$};
\begin{scope}[yshift={-\step}]
\node [above] at (0,0) {$\alpha|0\rangle_q\reds{|\chi_+\rangle_A}|0^L\rangle_B+\beta|1\rangle_q\reds{|\chi_-\rangle_A}|0^L\rangle_B$};
\draw[->] (0,0)--++(0,-\dl);
\node[right] at (0,-\dl/2) {QSP $W_{A:B}$ use $Y=\pm\Theta(M)$ on $|\chi_\pm\rangle_A$ to write sign on $B$};
\end{scope}
\begin{scope}[yshift={-2*\step}]
\node [above] at (0,0) {$\alpha|0\rangle_q|\chi_+\rangle_A|0^L\rangle_B+\beta \reds{i^L}|1\rangle_q|\chi_-\rangle_A\reds{|1^L\rangle_B}$};
\draw[->] (0,0)--++(0,-\dl);
\node[right] at (0,-\dl/2) {$V_M^\dagger$ inverse TAT packet separation to reset $A$ to $|0^M\rangle_A$};
\end{scope}
\begin{scope}[yshift={-3*\step}]
\node [above] at (0,0) {$\alpha|0\rangle_q\reds{|0^M\rangle_A}|0^L\rangle_B+\beta i^L|1\rangle_q\reds{|0^M\rangle_A}|1^L\rangle_B$};
\draw[->] (0,0)--++(0,-\dl);
\node[right] at (0,-\dl/2) {$Z_BX$ controlled rotation $R_A$ on $A$, duration $\pi/(4L)$};
\end{scope}
\begin{scope}[yshift={-4*\step}]
\node [above] at (0,0) {$\alpha|0\rangle_q|0^M\rangle_A|0^L\rangle_B+\beta \reds{i^{L+M}}|1\rangle_q\reds{|1^M\rangle_A}|1^L\rangle_B$};
\draw[->] (0,0)--++(0,-\dl);
\node[right] at (0,-\dl/2) {Single qubit rotation $S_q$};
\end{scope}
\begin{scope}[yshift={-5*\step}]
\node [above] at (0,0) {$\alpha|0\rangle_q|0^M\rangle_A|0^L\rangle_B+\beta|1\rangle_q|1^M\rangle_A|1^L\rangle_B$}; 
\end{scope}
\draw[gray!70] (-3.5,1) rectangle (9.1,-7.5);
\end{tikzpicture}
\caption{Sequence of operations in the new GHZ encoding protocol, with the target state (i.e. ignoring errors in the QSP read/write step) for each step shown. The changes of the state in each step are shown in red color.}\label{fig:protocol}
\end{figure}

The following protocol is also described in the chart in Figure~\ref{fig:protocol}.
\begin{itemize}[leftmargin=*,itemsep=5pt]
\item Set-up: We start with a control qubit which is in state $\alpha|0\rangle_q+\beta|1\rangle_q$. We divide the remaining qubits into two registers $A$ and $B$, with $|A|=M$ and $|B|=L$, both $M,L=\Theta(N)$, and $1+M+L=N$. 
The starting states for the qubits in these subsystems are $|0^M\rangle_A$ and $|0^L\rangle_B$ respectively. The total space is $\Lambda:=\{q\}\cup A\cup B$.

\item TAT packet separation on $A$: The set $A$ will undergo the same TAT-rotate-TAT procedure in the first part of the protocol from \cite{yin2025fast}, though squeezed to nearly the extreme limit. The set $B$ will act as a write-space for the QSP method. It will be written with all 1s or all 0s based on the $Y$ sign of qubits in $A$, and then will be used to copy 1s back to $A$ to form the $|1^N\rangle$ term. 
This avoids using a refocusing procedure to transform the state in $A$ into the GHZ state.

Let $X:=\sum_{i\in A}X_i$, $Y:=\sum_{i\in A}Y_i$, and $Z:=\sum_{i\in A}Z_i$.
Recall the TAT packet separation protocol consists of the 3 steps TAT squeezing, controlled $X$ rotation, then reverse TAT squeezing. 
Similar to \eqref{eqn:tatv}, we will take the TAT packet separation operator
\begin{align}\label{eqn:tat}
V_M:=e^{is_2K/(M+1)}e^{i\phi Z_qX/2}e^{-is_1K/(M+1)},
\end{align}
where $K=\frac12(XY+YX)$ as before,
but with different parameters
\begin{align}\label{eqn:svals}
s_1=\frac12\log\frac{M^{3/2}g}{\kappa(M+1)\log M}, \quad \phi=\kappa\frac{\log M}{M}, \quad s_2=\frac12\log\frac1\phi, 
\end{align}
where $g$ will be chosen $\Theta(\sqrt{\log\log M})$, and $\kappa\ge1$ is a choice of fixed numerical constant like $\kappa=1$ or $10$. The choice of $\kappa$ is solely for producing parameters which also work well for small or finite-$M$ numerics; any fixed choice of $\kappa$, such as $\kappa=1$, is sufficient for the asymptotic theorem with $M\to\infty$.
Note that $s_1\approx\frac14\log\frac{Mg}{\kappa\log M}\approx\frac14(\log M)(1+o(1))$, which corresponds to approaching extreme squeezing ($\tau_1\approx s_1/(2\log M)\to \tau_{\mathrm min}\approx1/8$ in the notation of \cite{yin2025fast}).
The extreme squeezing allows for taking the smaller $X$ rotation duration $\phi=\log M/M$, rather than the $\Theta((\log M)^2/M)$ of \cite{yin2025fast}. Since the extremely squeezed packets have width $\Delta y\approx O(1/M)$, shifting them apart by $\phi=\log M/M$ separates them.
While the resulting packets with near extreme squeezing are difficult to fully control or refocus, it will not matter for this protocol. In particular, we will only need to know where the main mass, or ``core'', of the packet goes, and will not need to control the tail behavior.

Applying $V_M$ to the qubits in $A$ with $q$ as the control qubit, the resulting state in $A$ consists of two packets $|\chi_\pm\rangle_A$, one close to each of the $Y$ poles in the Dicke sphere.
These states $|\chi_\pm\rangle_A$ may be very non-Gaussian due to the extreme squeezing, but their structure will not matter. It only matters that the cores of $|\chi_\pm\rangle_A$ are near enough to opposite $Y$ poles, in particular separated by a gap from each other. The total state is
\begin{align}
\alpha|0\rangle_q|\chi_+\rangle_A|0^L\rangle_B+\beta |1\rangle_q|\chi_-\rangle_A|0^L\rangle_B.
\end{align}

\item QSP read/write: Since the packets $|\chi_\pm\rangle_A$ are mostly located near the $Y$ poles, $Y$ will be of order $\Theta(M)$ on each. By using $Y$ in the generator for a unitary, this can then be used to flip all the bits in $|0^L\rangle_B$ in time close to order $\approx 1/M$, or more precisely $O(\log M/M)$. This is in contrast to the time $\Theta(1)$ required for a standard controlled rotation, which uses $\pm1$ control values.
The bit flips in time $O(\log M/M)$ are implemented using a quantum signal processing (QSP) read/write method to handle the variation in $Y$ values; in particular, QSP can be used to approximately implement the $\sgn$ function and coherently write properties of the quantum state into the qubits in $B$ \cite{low2017hamiltonian,gilyen2019quantum,haah2019product,rall2021faster,liu2025toward}.
This read/write process is described by the unitary $W_{A:B}$ defined following Theorem~\ref{thm:qsp}.
The result is a state which is approximately
\begin{align}
\alpha|0\rangle_q|\chi_+\rangle_A|0^L\rangle_B+\beta \reds{i^L}|1\rangle_q|\chi_-\rangle_A\reds{|1^L\rangle_B},
\end{align}
with the red text indicating the changes due to the QSP read/write method.

\item The inverse TAT packet separation $V_M^\dagger$ can then be applied to $\{q\}\cup A$ to send everything in $A$ (i.e. $|\chi_\pm\rangle_A$) back to $|0^M\rangle_A$, up to  the error from applying $W_{A:B}$. 

\item Controlled rotation, phase rotation: using the block $|1^L\rangle_B$, one can perform the controlled rotation $R_A=\exp(i\pi X/4)\exp(-i\pi Z_BX/(4L))$, where $Z_B:=\sum_{b\in B}Z_b$, to flip all the bits $|0^M\rangle_A\to|1^M\rangle_A$ in time $O(1/L)$, and then use a single qubit rotation $S_q$ to remove the remaining $i^{L+M}$ phase.
\end{itemize}

For convenience, we collect here the definitions of unitary operations in the protocol, except for the QSP read/write unitary $W_{A:B}$ which will be defined in \eqref{eqn:W-write} following Theorem~\ref{thm:qsp}:
\begin{align}\label{eqn:terms}
\begin{aligned}
V_M&= e^{is_2K/(M+1)}e^{i\phi Z_qX/2}e^{-is_1K/(M+1)}, \quad s_1,\phi,s_2\text{ as in \eqref{eqn:svals}}\\
R_A&= e^{i\pi X/4}e^{-i\pi Z_BX/(4L)},\\
S_q&= e^{-i\pi\xi(1-Z_q)/4},\quad \xi=(L+M)\mod 4.
\end{aligned}
\end{align}

Theorem~\ref{thm:encoding} is implied by
\begin{thm}[main result]\label{thm:protocol}
The protocol
\begin{align}\label{eqn:U-protocol}
U_N&= S_q R_A V_M^\dagger W_{A:B}V_M
\end{align}
described above and in Figure~\ref{fig:protocol} takes time $O(\log N/N)$ and produces a $(1-o(1))$-fidelity GHZ encoding. The Hamiltonian generator for each step is 2-local with bounded 2-qubit interaction terms. 
The $o(1)$ error term in the fidelity can be taken as $O\left(\frac{(\log\log M)^{3}}{(\log M)^4}\right)$, uniformly over $|\alpha|^2+|\beta|^2=1$.
\end{thm}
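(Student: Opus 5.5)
The proof splits the statement into three claims: the timing and locality count, an exact identity for the ideal circuit, and a fidelity estimate. All of the real error enters through the QSP step $W_{A:B}$ acting on the nonideal packets $|\chi_\pm\rangle_A$.

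\emph{Timing and locality.} The TAT generator $K/(M+1)=\frac{1}{2(M+1)}\sum_{i\ne j}(X_iY_j+Y_iX_j)+O(1/M)\cdot$(one-body terms) is 2-local with coefficients $O(1/M)$. Rescaling by $M$, each TAT step can therefore be run in time $s_{1,2}/M=O(\log M/M)$ with bounded couplings. The controlled rotation $\phi Z_qX/2$ is a sum of $Z_qX_i$ terms, so it is likewise 2-local and takes time $O(\phi)=O(\log M/M)$ once all pairs are used in parallel. The same holds for $Z_BX/(4L)$, which takes time $O(1/L)$. Since $S_q$ and $e^{i\pi X/4}$ are single-qubit, they are free. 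For $W_{A:B}$ I would invoke Theorem~\ref{thm:qsp}: each QSP phase step is generated by $Y\otimes\sum_{b}(\cdot)_b$, i.e.\ $2$-local couplings between $A$ and $B$, with single-qubit phase rotations on $B$ in between. Its degree is $d=O(\log(1/\epsilon)/\delta)$, where $\delta$ is the gap around $0$ in the normalized variable $y=Y/M$ and $\epsilon$ is the sign-approximation error. With $\delta=1/3$ (the red contours) and $\epsilon=1/\mathrm{poly}\log M$, the total duration is $O(d/M)=O(\log M/M)$. Summing the steps gives $O(\log N/N)$, since $M,L=\Theta(N)$.

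\emph{Ideal algebra.} Let $\Pi_\pm$ be the spectral projections of $Y$ onto $\pm y\ge 1/3$. On those subspaces Theorem~\ref{thm:qsp} gives $W_{A:B}=\Pi_+\otimes I_B+\Pi_-\otimes(iX)^{\otimes L}$, up to an operator-norm error $O(\sqrt{\epsilon L})$ or $O(\epsilon)$ per qubit, depending on how the theorem is stated. I will take it in the form of a product over $b\in B$ of identical single-qubit QSP sequences. Then the norm error on the full $B$ register is at most $L\epsilon$, so I need $\epsilon=o(1/L)$. Since $\log(1/\epsilon)=O(\log M)$, that is affordable. On the good subspace, the composition $V_M^\dagger\,\cdot\,V_M$ returns $A$ to $|0^M\rangle$. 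A direct computation shows that $R_A$ acts trivially when $B=|0^L\rangle$: there $Z_B=L$, so $e^{-i\pi X/4}$ cancels against $e^{i\pi X/4}$. When $B=|1^L\rangle$ it gives $e^{i\pi X/2}=i^M X^{\otimes M}$. Finally $S_q$ removes the phase $i^{L+M}$. Hence the ideal circuit maps the input exactly to the GHZ target.

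\emph{Fidelity, the main obstacle.} By the triangle inequality, the infidelity is bounded by the squared norm of $(I-\Pi_+)|\chi_+\rangle$ and $(I-\Pi_-)|\chi_-\rangle$, plus the QSP approximation error. Everything reduces to showing that the packet produced by $V_M$ from $|0^M\rangle$ (with $q=0$) puts mass at most $O((\log\log M)^3/(\log M)^4)$ on $\{y<1/3\}$, and symmetrically for $q=1$. My plan is to write the TAT dynamics in the Dicke basis of $Y$-eigenstates (or of $X$-eigenstates for the rotation step) as a three-term finite-difference equation. I would compare it with the continuum limit, which is a first-order transport equation $\partial_s f=$ (flow $\dot x=x,\ \dot y=-y$ on the squeezing chart) plus lower-order corrections, and whose characteristics are explicit hyperbolas. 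After the first TAT step, the $y$-width is of order $e^{-2s_1}\sim\sqrt{\kappa\log M/(M^{3/2}g)}$, which is about $g^{-1/2}$ times the natural scale. The shift $\phi=\kappa\log M/M$ then puts the core at distance $\gtrsim g\cdot$(width) from $0$, and the reverse TAT with $s_2=\frac12\log(1/\phi)$ expands $y$ by a factor $e^{2s_2}=1/\phi$, sending the core to $|y|\gtrsim$ const. I would choose the constants so that this constant exceeds $1/3$. The tail mass is then controlled by a Chebyshev/moment bound on the first-step packet. Using $g=\Theta(\sqrt{\log\log M})$ and a fourth-moment-type estimate gives the $(\log\log M)^3/(\log M)^4$ rate. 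The delicate part is making the continuum comparison rigorous near extreme squeezing, where $y$-widths reach the lattice scale $1/M$. For this I would run the discrete Heisenberg evolution of the moments $\langle Y^k\rangle$ directly, bound commutator corrections by $O(1/M)$ per unit $s$, and use Grönwall over total time $s_1+s_2=O(\log M)$; the factor $e^{O(s)}$ is absorbed by the choice of $s_1$ below the extreme time.
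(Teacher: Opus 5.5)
Your bookkeeping is correct and matches the paper. This covers the time and 2-locality count, the action of $R_A$ (trivial on $|0^L\rangle_B$, equal to $e^{i\pi X/2}=i^MX^{\otimes M}$ on $|1^L\rangle_B$), the phase fix by $S_q$, and the reduction of the infidelity to the squared leakage norm plus the QSP error. One caveat: your first choice $\epsilon=1/\mathrm{poly}\log M$ would not suffice, since the per-qubit QSP error accumulates to $L\epsilon$ on $B$. Your later requirement $\epsilon=o(1/L)$ is the right one, and the paper takes $\delta=1/M^2$.

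The genuine gap is in the step you call the main obstacle: the leakage bound $\|(1-\Pi_{\pm,1/3})V_M^\pm|0^M\rangle\|=O((\log\log M)^{3/2}/(\log M)^2)$ of Theorem~\ref{thm:tat}. Your sketch would not prove it, for three reasons.

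\emph{The rate cannot come from a moment bound.} The threshold $y=1/3$ sits $\Theta(g)$ initial standard deviations from the core. Chebyshev with fourth moments then gives leakage probability only $O(g^{-4})=O((\log\log M)^{-2})$. In the paper the tail contributes just $O(e^{-a^2g^2/4})=O((\log M)^{-2})$ in norm. The stated rate $(\log\log M)^3/(\log M)^4$ is instead the square of the discrete--continuum error $O(g^3/(\log M)^2)$. That error is dominated by the rotation step $\phi h^2/\sigma_*^3$, with $\sigma_*=e^{-2s_1}\sigma_0=\kappa\log M/(Mg)=\phi/g$; your width formula is off.

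\emph{Gr\"onwall with an unspecified $e^{O(s)}$ cannot be absorbed.} We have $s_1=\frac14\log M-O(\log\log M)$, so the margin to extreme squeezing is only polylogarithmic: $e^{4s_1}/M=\Theta(g^2/(\kappa\log M)^2)$. Against $O(1/M)$-per-unit-time errors, any growth $e^{Cs}$ with $C>4$ produces a positive power of $M$. You also propose integrating over $s_1+s_2\approx\frac34\log M$, which makes this worse.

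\emph{The moment hierarchy does not close.} The hierarchy for $\langle Y^k\rangle$ does not close under the TAT flow. At macroscopic $y$, where the linear picture ``expand by $1/\phi$'' fails, you have no explicit handle on it. So it cannot tell you where the bulk of the mass lands.

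The paper's argument rests on three ideas that your sketch lacks.

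\begin{enumerate}[(i)]
\item \emph{Cut off the tail at time zero, not after squeezing.} The state $|0^M\rangle$ is replaced by the discretized cut-off Gaussian core $S_hf_{M,g}$, with error $O(g^4/M)+O(e^{-a^2g^2/4})$ from Stirling plus Hoeffding on the binomial weights. Unitarity carries this error forward unchanged, so the possibly stringy tails of the squeezed packet never need to be estimated.
\item \emph{Transport the core exactly along the continuum characteristics.} In the stereographic coordinate $w=y/(1+\sqrt{1-y^2})$ the TAT flow is simply $w\mapsto e^{2s}w$. This shows that $\supp(T_M^\pm f_{M,g})$ lies exactly in $\{1/3<\pm y<1-\epsilon\}$.
\item \emph{Compare discrete and continuum evolutions by Duhamel.} Since $e^{sD_{h,v}}$ is unitary, there is no Gr\"onwall amplification at all. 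The only growth comes from evaluating the Taylor remainder on the current packet width: $\|D_{h,v}S_hf_s-S_hL_vf_s\|\lesssim h^2\sigma(s)^{-3}$, or $\max(1,\mu)h^2\sigma(s)^{-2}$ for the odd $v_1$. Over the first TAT step this integrates to exactly $O(h^2/\sigma_*^2)$.
\end{enumerate}

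If you simply invoke Theorem~\ref{thm:tat} as stated, your assembly goes through as in the paper. As a self-contained proof of the concentration estimate, your route does not.
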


Since $L,M=\Theta(N)$, it will be clear from \eqref{eqn:terms} and the later definition of $W_{A:B}$ that the total evolution time is $O(\log N/N)$ and that all Hamiltonian generators are 2-local with operator-norm-bounded 2-qubit interaction terms.
The main work will be to show that the protocol produces a $(1-o(1))$-fidelity GHZ state.

\subsection{Outline and notation}\label{subsec:outline}
The rest of the paper is organized as follows.
\begin{itemize}
\item Section~\ref{sec:proof}: Proof overview, including additional background and references.
\item Sections~\ref{sec:tat-proof} and \ref{sec:dc-error}: Proof of TAT packet separation Theorem~\ref{thm:tat}.
\item Section~\ref{sec:qsp-proof}: Proof of QSP method Theorem~\ref{thm:qsp}.
\end{itemize}

For convenience, we collect here some parameters defined in the text. 
\begin{align*}
\def\arraystretch{1.5}
\begin{array}{ll}
\text{Context} &\text{Parameters}\\\hline
\text{TAT packet separation} & s_1=\frac12\log\frac{M^{3/2}g}{\kappa(M+1)\log M}, \quad \phi=\kappa\frac{\log M}{M}, \quad s_2=\frac12\log\frac1\phi\\
\text{Continuum scaling}&h=\frac{2}{M+1},\quad y_m=hm\\
\text{Gaussian packet core} & \sigma_0=\frac{\sqrt{M}}{M+1},\quad a\in(0,1/2),\quad b=1/2,\quad g=\sqrt{8\log\log M}/a\\
\text{Packet squeezing} & \sigma_*:=e^{-2s_1}\sigma_0=\kappa\frac{\log M}{Mg},\quad e^{2s_2}\sigma_*=1/g
\end{array}
\end{align*}
The parameter $\kappa\ge1$ is a fixed constant, such as $\kappa=1$ or $10$, and thus may sometimes be silently absorbed into asymptotic notation. As previously mentioned, its choice is only for producing parameters which also work for small or finite-$M$ numerics; any fixed choice, such as $\kappa=1$ is sufficient for the asymptotic theorem with $M\to\infty$.

Notation: We use standard asymptotic notation $o,O,\Theta,\Omega,\omega$. Generic constants $C,c$ may change from line to line.
Throughout, we will use the coordinate $y\in[-1,1]$ to denote a normalized $Y$-coordinate, which we can think of as living on the Dicke sphere. For the $Y$-Dicke state $|m\rangle_Y$, $m=-M/2,-M/2+1,\ldots,M/2$, the associated $y$-coordinate is $y=2m/M\in[-1,1]$. For a continuum packet approximation, we will use the centered coordinate $y=y_m:=2m/(M+1)$, which is asymptotically equivalent as it differs only by $O(1/M)$.

\section{Proof overview}\label{sec:proof}

In this section, we give an overview of the proof of Theorem~\ref{thm:protocol}, including the main intermediate results, which we prove in Sections~\ref{sec:tat-proof}--\ref{sec:qsp-proof}.

\begin{proof}[Proof outline of Theorem~\ref{thm:protocol}]
The main technical result we need to prove Theorem~\ref{thm:protocol} is the following quantitative estimate for the behavior under TAT packet separation $V_M$.
\begin{thm}[TAT packet separation to $Y$ poles]\label{thm:tat}
Let
\begin{align}\label{eqn:VM}
V_M^\pm:=e^{is_2K/(M+1)}e^{\pm i\phi X/2}e^{-is_1K/(M+1)},
\end{align}
for $K=\frac12(XY+YX)$ and $s_1, \phi, s_2$ as in \eqref{eqn:svals}.
Then defining projections $\Pi_{+,\gamma}:=\oneb_{Y\ge\gamma M}$ and $\Pi_{-,\gamma}:=\oneb_{Y\le-\gamma M}$, we have
\begin{align}\label{eqn:ypoles}
\left\|(1-\Pi_{\pm,1/3})V_M^\pm|0^M\rangle_A\right\|_{\ell^2}&=O\left(\frac{(\log\log M)^{3/2}}{(\log M)^2}\right).
\end{align}
\end{thm}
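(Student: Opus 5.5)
We work throughout in the $Y$-Dicke basis $|m\rangle_Y$, where $V_M^\pm$ acts on the symmetric subspace and the state is a wavefunction $\psi(m)$ on the lattice $y_m=hm$, $h=2/(M+1)$. In this basis $K=\frac12(XY+YX)$ acts as a second-order finite-difference operator: $X$ is tridiagonal with entries $\sqrt{(M/2)^2-m(m\pm1)}$ and $Y$ is diagonal. Our plan is to compare each of the three steps with an explicit continuum evolution, keeping only the packet ``core''. By symmetry ($Y\mapsto -Y$ conjugation swaps $\pm$), it suffices to treat $V_M^+$.

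\textbf{Step 1 (squeezing).} We start from $|0^M\rangle_A$, whose $Y$-distribution is binomial. Up to phases it is approximately the Gaussian $\psi_0(y)\propto e^{-y^2/(4\sigma_0^2)}$ with $\sigma_0=\sqrt M/(M+1)$ (in probability, width $\sigma_0$). We approximate $e^{-is_1K/(M+1)}$ by the continuum flow generated by the first-order part of $K$. Near $y=0$ (the $Z$ pole region) this acts as the dilation $y\partial_y$, and more precisely as a flow along the characteristics of $\dot y = -y\sqrt{1-y^2}$-type equations. This flow contracts $y$ by $e^{-2s_1}$, so the core width becomes $\sigma_*=e^{-2s_1}\sigma_0=\kappa\log M/(Mg)$.

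To make this rigorous, we use the finite-difference relation. Writing $K=(M+1)\,\mathcal{L}_h$ with $\mathcal L_h$ a discrete first-order operator plus an $O(h)$ second-order remainder, we transport the ansatz along characteristics and bound the commutator/truncation error by Duhamel. We do not need to track the tails: we apply the restriction $|y|\le$ (core radius $\sim a$-dependent multiple of $\sigma$) and bound the mass outside by Gaussian tails. This is where $g=\sqrt{8\log\log M}/a$ enters, since Gaussian tails at $g$-scale radius give $e^{-cg^2}\sim(\log M)^{-c'}$.

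\textbf{Step 2 (rotation).} The map $e^{i\phi X/2}$ near the equator $y\approx0$ with $z\approx1$ translates $y$ by $\approx\phi=\kappa\log M/M$. We check this with the discrete relation on $X$ restricted to the core, where $\sqrt{1-y^2}=1-O(\sigma_*^2)$. The core is then centered at $y_c\approx\phi$, with width $\sigma_*=\phi/g\ll\phi$.

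\textbf{Step 3 (anti-squeezing).} The reverse TAT expands along $y$. On characteristics, with the exact continuum solution $y(s)=\tanh$-type, e.g. $\tfrac{y}{\sqrt{1-y^2}}$ multiplying by $e^{2s}$, the point $y_c-c\sigma_*$ maps to roughly $e^{2s_2}(\phi-c\phi/g)=1-c/g$ in the stretched variable. This lands safely beyond $y=1/3$ since $g\to\infty$, while staying away from the pole where the characteristics degenerate. Hence all core mass lies in $\{Y\ge M/3\}$. The non-core mass is controlled by unitarity: $\|(1-\Pi)V\psi\|\le\|(1-\Pi)V\psi_{\rm core}\|+\|\psi-\psi_{\rm core}\|$.

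\textbf{Main obstacle.} The hard part is the quantitative discrete-to-continuum error in Steps 1 and 3 over times $s\sim\frac14\log M$. Errors of the form $h\cdot\|\partial_y^2\psi\|$ can grow like $e^{4s}$ relative to the scale, and at near-extreme squeezing $\sigma_*\sim\log M/M$ is only a $\log$ factor above the lattice spacing $h$. We would first try a weighted Duhamel estimate in the rescaled variable $u=y/\sigma(s)$. In this variable the leading flow is exactly a translation, so the residual is $O(h/\sigma_*)=O(g/\kappa\log M)$ per unit time. Squaring this and integrating over the core should yield the stated $(\log\log M)^{3/2}/(\log M)^2$ rate. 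If that fails, we would fall back to exact binomial/Dicke formulas for the TAT-squeezed state near the pole.
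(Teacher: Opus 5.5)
You have the right architecture: a Gaussian core of $|0^M\rangle$, transport of the core along the characteristics of the continuum flows, a Duhamel comparison of discrete and continuum dynamics, the support of the continuum image in $\{y>1/3\}$, and a final triangle inequality using unitarity. The gap is the discrete-to-continuum bound, which is where the rate $(\log\log M)^{3/2}/(\log M)^2$ comes from. You only hope for it, and the mechanism you sketch would not give it.

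There is no $O(h)$ second-order remainder. With $c_m=\sqrt{(M/2-m)(M/2+m+1)}$ one has $h^2c_m^2=1-(y_m+h/2)^2$. This gives exactly $iX/2=D_{h,v_0}$ and $iK/(M+1)=D_{h,v_1}$, where $(D_{h,v}\psi)_m=\frac{1}{2h}\bigl[v(y_m-\frac h2)\psi_{m-1}-v(y_m+\frac h2)\psi_{m+1}\bigr]$, $v_0=\sqrt{1-y^2}$, $v_1=2y\sqrt{1-y^2}$. This is a skew-adjoint \emph{centered} difference, so the $O(h)$ terms cancel. Its Taylor remainder against $L_v=-v\partial_y-\frac12v'$ is $h^2$ times $L^2$ norms of third derivatives of $v(\cdot+t/2)f(\cdot+t)$, $|t|\le h$, plus an $h^3$ fourth-derivative term from passing to the sampled $\ell^2$ norm.

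Duhamel is linear in the residual, so ``squaring'' an $O(h/\sigma_*)$ residual has no justification. If your $O(h)$ remainder were really there, you would get at best $\int_0^{s_1}h\,\sigma(s)^{-1}ds\sim h/\sigma_*\sim g/\log M$ from the first squeezing and $\phi h\sigma_*^{-2}\sim g^2/\log M$ from the rotation. With the correct second-order consistency, the residual is $h^2\sigma^{-3}$ during the rotation and $\max(1,\mu)h^2\sigma^{-2}$ during the TAT steps. One power of $\sigma$ is saved because $v_1$ and its even derivatives vanish at $y=0$, while the packet lies within $O(\sigma(|u|+\mu))$ of $y=0$, with $\mu=O(1)$ in the first step and $\mu=O(g)$ afterwards. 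Integrating against $\sigma(s)=e^{-2s}\sigma_0$, $\sigma_*$, $e^{2s}\sigma_*$ gives $O(g^2/\log^2M)+O(g^3/\log^2M)+O(g^3/\log^2M)$.

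Two further ingredients are missing or wrong.

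\textbf{Derivative control.} The residual bounds need fourth-order derivative control of the transported packets, including in the last TAT step at macroscopic $y$. There the linear dilation picture in $u=y/\sigma(s)$ no longer holds. The paper writes $f_s(Y_s(u))=Y_s'(u)^{-1/2}F_g(u)$ and proves $Y_s'=\Theta(\sigma)$, $Y_s^{[r]}=O(\sigma)$ for $2\le r\le 5$, and $|Y_s|\le C\sigma(|u|+\mu)$ from the exact flows. This requires the core to be cut off \emph{smoothly}, $F_g(u)=\chi(u/g)(2\pi)^{-1/4}e^{-u^2/4}$ with $g^{-j}\chi^{[j]}$ bounded. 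A hard restriction to the core has a jump, so $L_vf$ is not even in $L^2$.

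\textbf{Step 3 geometry.} The TAT flow $\dot y=2y\sqrt{1-y^2}$ is linearized by $w=y/(1+\sqrt{1-y^2})=\tan(\frac12\arcsin y)$, with $w\mapsto e^{2s}w$, not by $y/\sqrt{1-y^2}$. The tail bound $e^{-a^2g^2/4}=(\log M)^{-2}$ forces the core to extend a fixed fraction of $g$ standard deviations, not a fixed number $c$ with $c/g\to0$. So both edges of the core must be checked. For $|u|\le g/2$ one gets $w_3=(\frac12+\frac{u}{2g})(1+o(1))$, so $\frac14-o(1)\le w_3\le\frac34+o(1)$ and $y_3\in[\frac{8}{17},\frac{24}{25}]$ up to $o(1)$. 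This lies above $1/3$ and is bounded away from the branch point $w=1$ at the pole.
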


Theorem~\ref{thm:tat} ensures that the TAT packet separation $V_M^\pm$ results in states $V_M^\pm|0\rangle^{\otimes M}$ which are sufficiently concentrated in a region near opposite $Y$ poles.
The resulting packets $V_M^\pm|0\rangle^{\otimes M}$ may be highly non-Gaussian and have complicated or stringy tails, but all that matters is that the vast majority of the mass is close enough to the $Y$ poles; in particular, there is a macroscopic $y$-coordinate gap between the cores of the packets $V_M^\pm|0\rangle^{\otimes M}$.
Note it is not enough to just control the center location of the packets, or to control the packets in a small region near the initial pole. We will need to control the evolution of the entire Gaussian packet ``core'' to macroscopic distances in $y$.

The full proof of Theorem~\ref{thm:tat} will be given in Sections~\ref{sec:tat-proof} and \ref{sec:dc-error}. The main steps for the proof of Theorem~\ref{thm:tat} are as follows. 
\begin{enumerate}[leftmargin=*,itemsep=5pt]
\item Gaussian (continuum) approximation to $|0\rangle^{\otimes M}$: In the $Y$-Dicke basis $|m\rangle_Y$, for $m=-M/2,-M/2+1,\ldots,M/2$, with $Y|m\rangle=2m|m\rangle$,
\begin{align}\label{eqn:0gaussian}
|0\rangle^{\otimes M}&=\sum_{m=-M/2}^{M/2}2^{-M/2}\binom{M}{M/2+m}^{1/2}|m\rangle_Y.
\end{align}
From standard central limit theorem/large deviation estimates, this will be well-approximated by a Gaussian core $f_{M,g}$, where by core we mean we start with a Gaussian function and use a cutoff function to exclude the infinite tails. The core $f_{M,g}$ will include around $g=\Theta(\sqrt{\log\log M})$ standard deviations. Importantly, $g\to\infty$, but only very slowly. While we already know \eqref{eqn:0gaussian} looks like a centered Gaussian with width $1/\sqrt{M}$ from the semiclassical approximation, we will need more specific norm, tail, and derivative estimates for the rest of the proof.

\item Continuum evolution to $Y$ poles under TAT packet separation $V_M$: We can identify the continuum limit generators $L_{v_0},L_{v_1}$ for each of the 3 steps in \eqref{eqn:VM}. Using the finite-difference representation of the dynamics to relate to a classical flow, such as the TAT flow studied in \cite{munoz2023phase}, we will prove that the Gaussian core $f_{M,g}$ is mapped by the (continuum version of) TAT packet separation to $y$-coordinates within $1/3$ of a $Y$ pole.
More precisely, letting $T_M^\pm:=e^{s_2L_{v_1}}e^{\pm\phi L_{v_0}}e^{-s_1L_{v_1}}$ be the continuum analogues of $V_M^\pm$, we will show
\begin{align}\label{eqn:supp}
\supp(T_M^+ f_{M,g})\subset\{y>1/3\},\quad\text{and}\quad \supp(T_M^- f_{M,g})\subset\{y<-1/3\}.
\end{align}
So the continuum evolution of the initial Gaussian state satisfies exactly the desired property in Theorem~\ref{thm:tat}.
Notably, this step uses a continuum limit, but does not use the small $y$ Holstein--Primakoff approximation \cite{holstein1940field}  at the starting $+Z$ pole, which is not applicable for macroscopic $y$ locations. In the next step, we will have to estimate the error from the continuum limit approximation.

\item Discrete-continuum error bounds: This is the most technical step, and is responsible for the main part of the error in \eqref{eqn:ypoles}. 
We need to estimate the difference between the discrete evolution and the continuum evolution of the Gaussian packet. This will be done using Duhamel's principle to bound the evolution difference in terms of the difference between the discrete and continuum generators. The latter will then be bounded using Taylor expansion and a sequence of ordinary differential equation (ODE) transport estimates. 

Heuristically, the approximation errors depend on both how well the discrete approximation at scale $h=O(1/M)$ compares to the packet width, and how large the derivative of the evolution operator is at that time. More sample points $y_m=hm$ per packet width imply a better approximation. Control on the derivative of the evolution operator is also necessary to bound the error.

\end{enumerate}

After the TAT packet separation, the state is
\begin{align}
\alpha|0\rangle_q{|\chi_+\rangle_A}|0^L\rangle_B+\beta|1\rangle_q{|\chi_-\rangle_A}|0^L\rangle_B,
\end{align}
for states $|\chi_\pm\rangle_A=V_M^\pm|0^M\rangle$.
By Theorem~\ref{thm:tat}, $|\chi_+\rangle_A$ and $|\chi_-\rangle_A$ lie nearly in the subspaces $\{Y\ge M/3\}$ and $\{Y\le -M/3\}$, respectively.
The large $Y=\Theta(M)$ from the $A$-qubits will be used to perform qubit rotations in $B$ in time $O(\log M/M)$ using a 2-local Hamiltonian with bounded 2-qubit interaction terms. The $Y$-separation between the two packets $|\chi_\pm\rangle_A$ allows to perform a controlled-like rotation on $B$, by approximating the $\sgn$ function using a construction from quantum signal processing (QSP) \cite{low2017hamiltonian,gilyen2019quantum,haah2019product}.
QSP allows for approximating general functions in terms of products of unitary matrices, and is thus particularly applicable to coherently reading out properties of quantum states \cite{liu2025toward,rall2021faster}; in this case, the $Y$-sign.

\begin{thm}[QSP $\sgn$ approximation]\label{thm:qsp}
Consider the single-qubit Hilbert space $\C^2$, with identity $I_2$ and single-qubit-Pauli $X$.
Let $0<\gamma<1$ and $0<\delta<1/2$. There is a constant $C_\gamma$, a single qubit unitary $U_0$, and rank-1 projectors $P_1,\ldots,P_d$, for $d\le C_\gamma\log(1/\delta)$, such that
\begin{align}\label{eqn:Qdelta}
Q_\delta(x)=U_0\prod_{j=1}^{d}e^{i\pi x(2P_j-1)/4}
\end{align}
satisfies
\begin{align}
\sup_{x\in[\gamma,1]}\|Q_\delta(x)-I_2\|\le\delta,\quad \sup_{x\in[-1,-\gamma]}\|Q_\delta(x)-iX\|\le\delta.
\end{align}
\end{thm}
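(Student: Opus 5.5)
The plan is to put the product \eqref{eqn:Qdelta} into a standard QSP form and then apply a known approximation theorem for the $\sgn$ function. For a rank-1 projector $P_j=|u_j\rangle\langle u_j|$, the operator $2P_j-1$ is a reflection, namely a Pauli operator $\sigma_{n_j}=\vec n_j\cdot\vec\sigma$ along some unit axis $\vec n_j$. Write $\sigma_{n_j}=W_jZW_j^\dagger$ for single-qubit unitaries $W_j$ and set $\theta=\pi x/4\in[-\pi/4,\pi/4]$. Then
\begin{align*}
Q_\delta(x)=U_0W_1e^{i\theta Z}W_1^\dagger W_2e^{i\theta Z}W_2^\dagger\cdots W_de^{i\theta Z}W_d^\dagger .
\end{align*}
This is exactly a QSP sequence: a fixed signal rotation $e^{i\theta Z}$ interleaved with arbitrary processing unitaries. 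Conversely, any QSP sequence with signal $e^{i\theta Z}$ and arbitrary $\SU(2)$ interleaving unitaries can be rewritten in this form, by absorbing consecutive processing unitaries into the $W_j$ and into $U_0$. So it suffices to find a QSP sequence of length $d=O_\gamma(\log(1/\delta))$ whose output approximates the piecewise target
\begin{align*}
F(x)=\exp\bigl(i\tfrac{\pi}{4}(1-\sgn x)X\bigr),
\end{align*}
which equals $I_2$ for $x>0$ and $iX$ for $x<0$.

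\textbf{Step 1 (scalar approximation).} Change variables to $a=\sin\theta$, which is monotone on $[-\pi/4,\pi/4]$. Then $x\in[\gamma,1]$ corresponds to $a\in[\sin(\pi\gamma/4),1/\sqrt2]$, and similarly for negative $x$. I will invoke the standard polynomial approximation of $\sgn$ (Low--Chuang, Gily\'en--Su--Low--Wiebe): there is an odd real polynomial $p$ with $|p|\le1$ on $[-1,1]$ and $|p(a)-\sgn a|\le\delta'$ for $|a|\ge\gamma'$, of degree $O(\gamma'^{-1}\log(1/\delta'))$. Taking $\gamma'=\sin(\pi\gamma/4)$ gives degree $C_\gamma\log(1/\delta)$. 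Ultimately the goal is a target of the form $\cos(\tfrac{\pi}{4}(1-s))I_2+i\sin(\tfrac{\pi}{4}(1-s))X$ with $s\approx\sgn$. Since cosine and sine of a polynomial are not polynomials, I will instead use the \emph{half-sum} entries $(1\pm p)/2$ and $i(1-p)/2$-type expressions. Equivalently, I will target a matrix whose diagonal entry is $\approx(1+\sgn)/2$ and whose off-diagonal entry is $\approx i(1-\sgn)/2$. Each of these is a polynomial in $a$ (and $\sqrt{1-a^2}$) with no definite parity.

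\textbf{Step 2 (matrix completion and QSP realization).} I will apply the Laurent-polynomial form of QSP, in Haah's product-decomposition version: any $\SU(2)$-valued Laurent polynomial in $w=e^{i\theta}$ of degree $d$ factors as a product of $d$ signal rotations $e^{i\theta Z}$ interleaved with constant unitaries. To obtain such a unitary-valued function from the approximate scalar entries, I will prescribe the real polynomial $A(\theta)\approx(1+\sgn)/2$ and then find complementary trigonometric polynomials $B$ with $|A|^2+|B|^2=1$, using the Fej\'er--Riesz theorem. This requires first shrinking $A$ by a factor $1-\delta'$ so that $1-|A|^2\ge0$ strictly. One then has to check that the completed off-diagonal entry is close to $i$ on the negative region and close to $0$ on the positive region, possibly after fixing its phase with $U_0$. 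On both regions $|A|$ is near $0$ or $1$, so $|B|$ is forced near $1$ or $0$. The phase of $B$ on the negative region is the one remaining freedom.

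\textbf{Main obstacle.} I expect the difficulty to be this last phase issue. Fej\'er--Riesz determines $B$ only up to a $\theta$-dependent phase, and a single constant $U_0$ can correct only a constant phase. To handle this, I will first try to prescribe both entries simultaneously in the GSLW ``$(P,Q)$'' form: use the real parts of $P$ and $Q$ as targets, built from the same odd $p$, and use the result that any real pair satisfying the norm constraint is achievable, with the imaginary parts free. If that fails, I will fall back on the symmetrized trick of realizing $e^{i\frac{\pi}{4}(1-p(a))X}$ approximately via a product of two QSP sequences. The final error $\delta$ then follows from the triangle inequality, with $\delta'=c\delta$.
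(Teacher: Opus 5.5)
\textbf{Gap: the parity constraint.} The gap is in Step 2, and it is not the phase issue you single out. A product $U_0\prod_{j=1}^d e^{i\theta(2P_j-1)}$ with $\theta=\pi x/4$ is a Laurent polynomial in $w=e^{i\theta}$ that contains only powers $w^j$ with $j\equiv d \pmod 2$. Haah's Theorem 2 decomposes only $\SU(2)$-valued Laurent polynomials with this parity. The version you quote is false without that hypothesis: for instance, $\frac{1+\cos\theta}{2}I+\frac{i\sin\theta}{2}X+\frac{i(1-\cos\theta)}{2}Y-\frac{i\sin\theta}{2}Z$ is $\SU(2)$-valued of degree $1$, but it is not a product of any number of signal rotations and constants.

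Your targets $(1\pm p(\sin\theta))/2$, with $p$ odd, are a constant plus odd powers of $w$; as you observe yourself, they have no definite parity. So none of your options can realize them: not a Fej\'er--Riesz completion, not a phase fix via $U_0$, not the GSLW $(P,Q)$ variant (it carries the same parity constraints), and not a product of two sequences.

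The missing idea is frequency doubling: apply $p$ to $\sin(2\theta)=\sin(\pi x/2)$ rather than $\sin(\pi x/4)$. The paper sets $q(z)=(1-\varepsilon)p\bigl(\frac{z-z^{-1}}{2i}\bigr)$ and builds $F(z)$ in the variable $z=e^{i\pi x/2}$. It then applies Haah's decomposition to $\tilde F(w)=F(w^2)$, which contains only even powers of $w=e^{i\pi x/4}$ and has degree $2d'$. The gap parameter becomes $\eta=\sin(\pi\gamma/2)$.

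\textbf{Resolving the phase issue.} Once parity is fixed, the phase problem disappears if you work with real-on-circle Pauli coefficients instead of matrix entries. Prescribe $a=(1+q)/2$ as the coefficient of $I$ and $b=(1-q)/2$ as the coefficient of $iX$. Since $a^2+b^2=(1+q^2)/2<1$ on the circle, Haah's Lemma 4 (a Fej\'er--Riesz argument) supplies $c,d$ with $F=aI+ibX+icY+idZ\in\SU(2)$. Because the completion enters only through $c^2+d^2=1-a^2-b^2$, you get directly $\|F-I\|^2=2-2a=1-q$ and $\|F-iX\|^2=2-2b=1+q$.

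This is what your ``first try'' is reaching for. Your primary route, prescribing only the diagonal and completing it, really would leave an uncontrolled $\theta$-dependent phase on the off-diagonal in the negative region.

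\textbf{Error scaling.} The resulting errors are $O(\sqrt{\varepsilon})$, not $O(\varepsilon)$, so you need $\varepsilon\sim\delta^2$ rather than $\delta'=c\delta$. The degree is still $O_\gamma(\log(1/\delta))$.
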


The proof uses construction of a polynomial approximation $p$ to the $\sgn$ function of degree $O(\eta^{-1}\log(1/\delta))$ such that \cite{eremenko2007uniform,low2017hamiltonian,gilyen2019quantum}
\begin{align}
|p(x)-\sgn(x)|\le \delta\text{ for }|x|\ge \eta.
\end{align}
Constructions and properties from \cite{haah2019product} are then used to turn this into the desired unitary operator $Q_\delta$. The proof of Theorem~\ref{thm:qsp} is given in Section~\ref{sec:qsp-proof}.

We can apply Theorem~\ref{thm:qsp} to construct the unitary $W_{A:B}$ which writes the sign of $Y=\sum_{i\in A}Y_i$ to the workspace $B$ qubits in time $O(\log M/M)$ with small error.
For $j=1,\ldots,d$, write $2P_j-1=\mathbf n_j\cdot\boldsymbol\sigma$ for some $\|\mathbf n_j\|_2\le1$. Let $\boldsymbol\sigma_b=(X_b,Y_b,Z_b)$ act on qubit $b\in B$, and define
\begin{align}
H_j&=Y\sum_{b\in B}\mathbf n_j\cdot\boldsymbol\sigma_b,
\end{align}
which has bounded 2-qubit interaction terms. 
Define
\begin{align}\label{eqn:W-write}
W_{A:B}:=U_0^{\otimes L}e^{i\pi H_1/(4M)}e^{i \pi H_2/(4M)}\cdots e^{i\pi H_d/(4M)}=Q_\delta^{\otimes L}\left(\frac{Y}{M}\right),
\end{align}
where by the notation $Q_\delta^{\otimes L}(Y/M)$, we mean we apply the same $Q_\delta(Y/M)$ to each of the $L$ qubits in $B$ (in parallel). This takes total time $\frac{d\pi}{4M}=O(\frac{\log(1/\delta)}{M})$. We will take $\delta=1/M^2$, so the time for $W_{A:B}$ will be $O(\log M/M)$.

To complete the proof of Theorem~\ref{thm:protocol}, we combine Theorems~\ref{thm:tat} and \ref{thm:qsp}, the latter with $\gamma=1/3$. 
On the range of the projection $\Pi_{+,1/3}\otimes I_B$, we have $Q_\delta(Y/M)=I_2+O(\delta)$. Since $\|Q_\delta(x)^{\otimes L}-I^{\otimes L}\|\le L\|Q_\delta(x)-I\|\le L\delta$, we get
\begin{align*}
W_{A:B}V_M|0\rangle_q|0^M\rangle_A|0^L\rangle_B&=W_{A:B}\Pi_{+,1/3}V_M^+|0\rangle_q |0^M\rangle_A|0^L\rangle_B+O\left(\frac{(\log\log M)^{3/2}}{(\log M)^2}\right)\\
&=|0\rangle_q|\chi_+\rangle_A|0^L\rangle_B+O(L\delta)+O\left(\frac{(\log\log M)^{3/2}}{(\log M)^2}\right),\numberthis
\end{align*}
where the errors are in norm bound.
Similarly, on the range of $\Pi_{-,1/3}$, we have $Q_\delta(Y/M)=iX+O(\delta)$, so
\begin{align*}
W_{A:B}V_M|1\rangle_q|0^M\rangle_A|0^L\rangle_B&=i^L|1\rangle_q|\chi_-\rangle_A|1^L\rangle_B+O(L\delta)+O\left(\frac{(\log\log M)^{3/2}}{(\log M)^2}\right).\numberthis
\end{align*}
The remaining operations $V_M^\dagger$, $R_A$, and $S_q$ in the protocol in Theorem~\ref{thm:protocol} are unitary and exact, so the error bound carries forward, and we obtain
\begin{align}\label{eqn:psi-out}
U_N(\alpha|0\rangle_q+\beta|1\rangle_q)|\mathbf 0\rangle_{A\cup B} &= (\alpha|\mathbf0\rangle_\Lambda+\beta|\mathbf1\rangle_\Lambda)+O(L\delta)+O\left(\frac{(\log\log M)^{3/2}}{(\log M)^2}\right).
\end{align}
Since we take $\delta=1/M^2$, this gives GHZ encoding fidelity $1-O\left(\frac{(\log\log M)^{3}}{(\log M)^4}\right)$ as follows. Letting $|\psiout\rangle:=U_N(\alpha|0\rangle_q+\beta|1\rangle_q)|\mathbf 0\rangle_{A\cup B}$, and $|G_{\alpha\beta}\rangle:=\alpha|\mathbf 0\rangle_\Lambda+\beta|\mathbf 1\rangle_\Lambda$ be the GHZ-encoding target, then \eqref{eqn:psi-out} gives $|\psiout\rangle=|G_{\alpha\beta}\rangle+|R\rangle$ with $\|R\|_2\le O\left(\frac{(\log\log M)^{3/2}}{(\log M)^2}\right)$. Then the fidelity is
\begin{align*}
|\langle\psiout|G_{\alpha\beta}\rangle|^2=|1+\langle R|G_{\alpha\beta}\rangle|^2&\ge 1+2\Re\langle R|G_{\alpha\beta}\rangle+(\|R\|_2^2-\|R\|_2^2)\\
&=\||G_{\alpha\beta}\rangle+|R\rangle\|_2^2-\|R\|_2^2\\
&=\|\psiout\|_2^2-\|R\|_2^2\ge1-O\left(\frac{(\log\log M)^{3}}{(\log M)^4}\right),\numberthis
\end{align*}
uniformly over $|\alpha|^2+|\beta|^2=1$.
The physical implementation time is $O(\log M/M)$ for $V_M$ and $V_M^\dagger$, $O(\frac{\log(1/\delta)}{M})=O(\frac{\log M}{M})$ for $W_{A:B}$, and $\pi/(4L)=O(1/M)$ for $R_A$, so we obtain total encoding time $O(\log M/M)$. 
\end{proof}

\section{First parts of the proof of Theorem~\ref{thm:tat}: Continuum limit and evolution}\label{sec:tat-proof}

In this section, we prove the first two steps of the proof of Theorem~\ref{thm:tat} listed in Section~\ref{sec:proof}, namely
\begin{enumerate}
\item Gaussian (continuum) approximation to $|0\rangle^{\otimes M}$.
\item Continuum evolution to $Y$ poles under TAT packet separation $V_M$.
\end{enumerate}
The main results for these are stated as Propositions~\ref{prop:gstate} and \ref{prop:supp}, respectively.

We first briefly discuss some set-up, continuum limit scaling, and notation conventions. 
On $A$, the initial state $|0^M\rangle_A$ is symmetric in the qubits in $A$, and all operators in \eqref{eqn:svals} are also symmetric in $A$. Thus we work in the symmetric subspace $\mathrm{Sym}^M(\C^2)$, which has dimension $M+1$.
It will be convenient to work in the $Y$-Dicke basis $|m\rangle_Y$, $m=-M/2,-M/2+1,\ldots,M/2$, which for $J_y:=Y/2$ satisfy
\begin{align}
J_y|m\rangle_Y=m|m\rangle_Y.
\end{align}
Explicitly, $|m\rangle_Y$ is the equal superposition of the product states formed from the $\sigma_y$-basis states $|\pm\rangle=\frac1{\sqrt{2}}(|0\rangle\pm i|1\rangle)$, with $(M/2+m)$ $|+\rangle$ factors and $(M/2-m)$ $|-\rangle$ factors, i.e.~letting $|\pm\rangle_S:=\otimes_{i\in S}|\pm\rangle_i$,
\begin{align}
|m\rangle_Y&=\binom{M}{p}^{-1/2}\sum_{\substack{S\subseteq A\\|S|=p}}|+\rangle_S|-\rangle_{A\setminus S},\quad p=m+\frac{M}{2}\in\Z.
\end{align}

As in Figures~\ref{fig:bloch3} and \ref{fig:bloch}, states in the symmetric subspace (Dicke manifold) can be visualized on the Dicke sphere using spin-coherent states.
Recall a spin-coherent state \cite{arecchi1972atomic} is a state of the form $|\mathbf n\rangle^{\otimes M}$ for a single-qubit state $|\mathbf n\rangle$ on the Bloch sphere, i.e.~all the spins point in the same direction $\mathbf n\in \Sb^2$. 
The distribution of a state $|\psi\rangle$ can be visualized on the unit sphere via its Husimi-$Q$ function on the sphere, defined as $Q_\psi(\mathbf n):=|\langle\mathbf n^{\otimes M}|\psi\rangle|^2$. A spin-coherent state appears as a small concentrated region centered at $\mathbf n$, while a $Y$-Dicke basis state $|m\rangle_Y$ appears as a band around a constant $y$-value, $y\approx 2m/M$.

If we consider the Dicke sphere $y$-coordinate $y\in[-1,1]$, we can partition $[-1,1]$ into $M+1$ equal-sized intervals, and associate each with one of the $M+1$ basis states $|m\rangle_Y$. Since $Y=2J_y$, we can think of the relation between $y$ and $m$ as $y=\frac{2m}{M+1}$, which relates each Dicke state $|m\rangle_Y$ with the center of a size $1/(M+1)$ cell in $[-1,1]$. The lattice spacing between adjacent states is then $h=\frac{2}{M+1}$, and we have $y$-coordinates $y_m=hm$.

\subsection{Gaussian (continuum) approximation to \texorpdfstring{$|0\rangle^{\otimes M}$}{|0>^M}}\label{subsec:g0}

Recall from \eqref{eqn:0gaussian}, or by computing ${}_Y\!\langle m|0^M\rangle$, the state $|0\rangle^{\otimes M}$ is expressed in the $Y$-Dicke basis $|m\rangle_Y$ as
\begin{align}\label{eqn:psi-M}
|\psi_M\rangle=\sum_{m=-M/2}^{M/2}2^{-M/2}\binom{M}{M/2+m}^{1/2}|m\rangle_Y,
\end{align}
where we use $|\psi_M\rangle$ to mean the vector is always written in the $|m\rangle_Y$ basis, and we view it only in terms of its coordinates in this basis.
We will show this is well-approximated by a Gaussian core $f_{M,g}$, which is centered at $y=0$ and has width $\Theta(1/\sqrt{M})$. 
This is in agreement with the large-$M$ semiclassical approximation, but we will need norm, tail, and derivative estimates which are not covered by the usual semiclassical approximation.

Let $F_g(u):=\chi(u/g)\frac{e^{-u^2/4}}{(2\pi)^{1/4}}$, with $\chi$ a smooth $C^\infty$ cut-off function which is 1 on $|x|\le a$ and $0$ for $|x|\ge b$, for 
\begin{align}\label{eqn:gab}
g=\sqrt{8\log\log M}/a, \quad \text{any fixed }0<a<1/2, \quad b=1/2.
\end{align}
The function $F_g$ is the ``core'' of a Gaussian which is normalized as a half-density, and the scale $g$ ensures that the core slowly covers more and more standard deviations as $M\to\infty$. We use a Gaussian core instead of the full Gaussian because we want to ignore tail behavior, which can be poorly behaved after extreme squeezing in $V_M$.
We then define
\begin{align}\label{eqn:fm}
f_{M,g}(y):=\sigma_0^{-1/2}F_g(y/\sigma_0),
\end{align}
for $\sigma_0:=\sqrt{M}/(M+1)$.\footnote{This choice of $\sigma_0$ is asymptotically the same as $1/\sqrt{M}$, which is the initial semiclassical packet width of $|0\rangle^{\otimes M}$. We take this particular form $\sigma_0=\sqrt{M}/(M+1)$ because it relates more conveniently to the precise discrete distribution of $|0\rangle^{\otimes M}$. In particular, the continuum limit terms $q_M(m)$ in the proof take particularly nice forms with this choice of $\sigma_0$, even though initially is looks more complicated than using $1/\sqrt{M}$.} 
Then $f_{M,g}$ is supported on $|y|/\sigma_0\le bg$.
Note the square root normalization in \eqref{eqn:fm} is because we work with half-densities and $L^2$ normalization.
We will show
\begin{prop}[Gaussian state approximation]\label{prop:gstate}
Let $f_{M,g}$ be as in \eqref{eqn:fm}, with $g=\Theta(\sqrt{\log\log M})$.
Letting $(S_hf)_m:=\sqrt{h}f(hm)$ be the discretization of a function $f:\R\to\C$ at scale $h=2/(M+1)$, then
\begin{align}\label{eqn:gaussian-state}
\|\psi_M-S_hf_{M,g}\|_{\ell^2}&\le O\left(\frac{g^4}{M}\right)+O(e^{-a^2g^2/4}).
\end{align}
\end{prop}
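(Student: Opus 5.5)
We compare the two sequences entrywise. Write $(\psi_M)_m = 2^{-M/2}\binom{M}{M/2+m}^{1/2}$ and $(S_hf_{M,g})_m=\sqrt h\,\sigma_0^{-1/2}\chi\!\left(\frac{hm}{\sigma_0 g}\right)(2\pi)^{-1/4}e^{-(hm)^2/(4\sigma_0^2)}$. With $h=2/(M+1)$ and $\sigma_0=\sqrt M/(M+1)$ we have $hm/\sigma_0=2m/\sqrt M$ and $h/\sigma_0=2/\sqrt M$. The discretized core is therefore
\[
(S_hf_{M,g})_m=\chi\!\left(\tfrac{2m}{g\sqrt M}\right)\Big(\tfrac{2}{\pi M}\Big)^{1/4}e^{-m^2/M}.
\]
This is exactly the square root of the de Moivre--Laplace approximation $\binom{M}{M/2+m}2^{-M}\approx\sqrt{2/(\pi M)}\,e^{-2m^2/M}$, which explains the choice of $\sigma_0$. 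We split $\ell^2$ into the \emph{core region} $|m|\le \tfrac a2 g\sqrt M$, where $\chi=1$, and its complement. On the complement we bound both vectors separately by their $\ell^2$ mass and do not compare them.

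\emph{Core region.} Use Stirling's formula with explicit error (e.g.\ Robbins' bounds) together with the Taylor expansion of $(M/2+m)\log(1+2m/M)+(M/2-m)\log(1-2m/M)$. This gives, uniformly for $|m|\le g\sqrt M$,
\[
(\psi_M)_m^2=\sqrt{\tfrac{2}{\pi M}}\,e^{-2m^2/M}\exp\!\Big(O\big(\tfrac1M+\tfrac{m^2}{M^2}+\tfrac{m^4}{M^3}\big)\Big).
\]
Writing $u=2m/\sqrt M$, so that $|u|\le O(g)$, the relative error is $O(g^4/M)$. Taking square roots, $(\psi_M)_m=(S_hf)_m(1+O(g^4/M))$. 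Hence the $\ell^2$ norm of the difference on the core is at most $O(g^4/M)\,\|S_hf\|_{\ell^2}=O(g^4/M)$. The last step uses the Riemann-sum bound $\sum_m h|f(hm)|^2\le \int|f|^2+O(h/\sigma_0)=O(1)$.

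\emph{Tails and the cutoff transition region.} For $|m|>\tfrac a2 g\sqrt M$, both entries are bounded by Gaussians, so each tail is bounded separately. For $S_hf$, the tail sum is dominated by a Gaussian integral: $\sum_{|u|\ge ag}$ of $h\sigma_0^{-1}e^{-u^2/2}/\sqrt{2\pi}$ is $O(e^{-a^2g^2/2})$, so its square root is $O(e^{-a^2g^2/4})$. For $\psi_M$, the Hoeffding/Chernoff bound gives $\sum_{|m|\ge t}\binom{M}{M/2+m}2^{-M}\le 2e^{-2t^2/M}$. With $t=\tfrac a2 g\sqrt M$ this equals $2e^{-a^2g^2/2}$, whose square root is again $O(e^{-a^2g^2/4})$. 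Combining the two regions by the triangle inequality yields \eqref{eqn:gaussian-state}.

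The main obstacle is the core estimate. It requires a uniform relative error that holds all the way to $|u|\sim bg$ with bookkeeping of the explicit quartic term. The plan is to write $\log\big(\binom{M}{M/2+m}2^{-M}\big)$ exactly via Stirling with remainder and expand the entropy function to fourth order with a Lagrange remainder, valid since $|2m/M|\le g/\sqrt M\to0$. The $m^4/M^3=O(g^4/M)$ term sets the rate. We also need $g^4/M\to0$ so that the square root of $1+O(g^4/M)$ is harmless, which is immediate since $g=\Theta(\sqrt{\log\log M})$. The half-integer/integer parity of $m$ relative to $M$ needs no special care, since the formulas hold for every admissible $m$.
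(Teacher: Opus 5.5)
Your proposal is correct and follows essentially the same route as the paper. Both use a Stirling expansion giving a uniform relative error $O(g^4/M)$ between $p_M$ and the Gaussian $q_M(m)=\sqrt{2/(\pi M)}e^{-2m^2/M}$ where $\chi=1$, together with Hoeffding's inequality at $r=ag\sqrt{M}/2$ for the tails. The only cosmetic difference is that you bound the tail of $S_hf_{M,g}$ directly by a Gaussian sum, whereas the paper uses $q_M=O(p_M)$ on the transition region $ag\sqrt{M}/2\le|m|\le bg\sqrt{M}/2$, so that Hoeffding controls both tails at once.
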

\begin{proof}
Let $p_M(m):=2^{-M}\binom{M}{m+M/2}$ be the square of the entries of $|\psi_M\rangle$. Let $\tilde f_{M,g}:=\sigma_0^{-1/2}G(y/\sigma_0)$ for $G(u):=\frac{e^{-u^2/4}}{(2\pi)^{1/4}}$, so this is essentially $f_{M,g}$ but without the cutoff $\chi$. Letting $q_M(m):=|(S_h\tilde f_{M,g})_m|^2$, we can evaluate
$q_M(m)=\sqrt{\frac{2}{\pi M}}e^{-2m^2/M}$.
To estimate
\begin{align}
\|\psi_M-S_hf_{M,g}\|_{\ell^2}^2&=\sum_{m=-M/2}^{M/2}\left[\sqrt{p_M(m)}-\chi\left(hm/(g\sigma_0)\right)\sqrt{q_M(m)}\right]^2,
\end{align}
we first use Stirling's formula $\log n!=n\log n-n+\frac12\log(2\pi n)+O(1/n)$ to compare $p_M$ to $q_M$. For $m=o(M)$, we have
\begin{align*}
\log p_M(m)&=-M\log2+\log\frac{M!}{(\frac{M}{2}+m)!(\frac{M}{2}-m)!}\\
&=\frac12\log\frac{2}{\pi M}-\frac{2m^2}{M}+O\left(\frac{1}{M}+\frac{m^2}{M^2}+\frac{m^4}{M^3}\right),\numberthis
\end{align*}
where we expanded $\log(\frac{M}{2}\pm m)=\log\left(\frac{M}{2}(1\pm\frac{2m}{M})\right)$ to third order.
If we restrict to $|m|\le bg\sqrt{M}$ and recall $h=2/(M+1)$, then we get uniformly over such $m$,
\begin{align}\label{eqn:logpq}
\log\frac{p_M(m)}{q_M(m)}&= O\left(\frac{1}{M}+\frac{g^2}{M}+\frac{g^4}{M}\right)=O\left(\frac{g^4}{M}\right).
\end{align}
Letting $q_M=p_Me^{-\delta}$ with $\delta(m)=O(g^4/M)=o(1)$ by \eqref{eqn:logpq}, and noting $\|\sqrt{p_M}\|_{\ell^2}=1$, we see
\begin{align}
\|\oneb_{|m|\le bg\sqrt{M}}(\sqrt{p_M}-\sqrt{q_M})\|_{\ell^2}&=O\left(\frac{g^4}{M}\right).
\end{align}

It remains to handle the tails and cutoff $\chi$. 
Recall $\chi$ is $1$ on $|x|\le a$ and $0$ for $|x|\ge b$; thus $\chi(hm/(g\sigma_0))=1$ for $|m|\le ag\sigma_0/h=ag\sqrt{M}/2$, and so it will suffice to consider $|m|\ge ag\sqrt{M}/2$.
By \eqref{eqn:logpq}, we have $q_M(m)=O(p_M(m))$ for $ag\sqrt{M}/2\le |m|\le bg\sqrt{M}$. Since also $\chi(hm/(g\sigma_0)=0$ for $|m|>bg\sigma_0/h= bg\sqrt{M}/2$, it suffices to bound $p_M$. 
One can quickly check, for either parity of $M$, that the probability distribution with discrete density $p_M$ is that of $S_M:=\xi_1+\cdots+\xi_M$, for i.i.d. symmetric $\xi_j\in\{\pm1/2\}$.
Hoeffding's inequality then gives
\begin{align}
\sum_{|m|\ge r}p_M(m)=\P[|S_M|\ge r]\le 2e^{-2r^2/M},
\end{align}
and taking $r=ag\sqrt{M}/2$ gives $\|\oneb_{|m|\ge ag\sqrt{M}/2}(\sqrt{p_M}-\chi\sqrt{q_M})\|_{\ell^2}=O(e^{-a^2g^2/4})$. 
\end{proof}

\subsection{Finite-difference relation for the dynamics} 

In order to consider the continuum time evolution of the packets to the $Y$ poles under TAT packet separation $V_M$, we first identify the continuum limits for the generators $iX/2$ and $iK/(M+1)$ for $K=\frac12(XY+YX)$. 
This has been well studied; for example, for $K$ this corresponds to the classical TAT flow studied in \cite{munoz2023phase}.
For completeness, we give a derivation for the normalization and form needed here.
Note that both $X$ and $K$ are tri-diagonal in the $Y$-Dicke basis, since they only act between neighboring states $|m\rangle_Y$ and $|m\pm1\rangle_Y$. We can then write them as finite-difference operators, and identify the continuum limit.
\begin{lem}[finite-difference relation]\label{lem:Dhv}
For the scale $h=2/(M+1)$ and function $v:[-1,1]\to\R$, define the finite-difference operator 
\begin{align}\label{eqn:Dhv}
(D_{h,v}\psi)_m:=\frac{v(hm-h/2)}{2h}\psi_{m-1}-\frac{v(hm+h/2)}{2h}\psi_{m+1},\quad m=-M/2,-M/2+1,\ldots,M/2,
\end{align}
which is skew-adjoint. We set $\psi_m=0$ for $m$ outside the above range.
Let $X=\sum_{j=1}^M X_j$, $Y=\sum_{j=1}^M Y_j$, and $K=\frac12(XY+YX)$.
Then in the $Y$-Dicke basis $|m\rangle_Y$ on the symmetric subspace,
\begin{align}\label{eqn:Dv}
\begin{aligned}
\frac{iX}{2}&=D_{h,v_0},\quad\text{ for } v_0(y)=\sqrt{1-y^2},\\
\frac{iK}{M+1}&=D_{h,v_1},\quad\text{ for } v_1(y)=2y\sqrt{1-y^2}.
\end{aligned}
\end{align}
\end{lem}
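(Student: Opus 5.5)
The plan is to realize both $X$ and $K$ through spin ladder operators adapted to the $Y$ axis, and then read off their tridiagonal matrix elements in the $Y$-Dicke basis. Skew-adjointness of $D_{h,v}$ is immediate: the coefficient of $\psi_{m-1}$ in row $m$ is $v(hm-h/2)/(2h)$. The coefficient of $\psi_m$ in row $m-1$ is $-v(h(m-1)+h/2)/(2h)=-v(hm-h/2)/(2h)$. These are real and opposite, so the matrix is real antisymmetric.

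Set $J_a:=\frac12\sum_j\sigma^a_j$ and $j=M/2$. Since $(J_z,J_x,J_y)$ is a cyclic permutation of $(J_x,J_y,J_z)$, the operators $J'_\pm:=J_z\pm iJ_x$ are raising and lowering operators for $J_y$. Hence, up to the phase convention of the basis,
\[
J'_+|m\rangle_Y=\sqrt{(j+m+1)(j-m)}\,|m+1\rangle_Y,\qquad J'_-|m\rangle_Y=\sqrt{(j-m+1)(j+m)}\,|m-1\rangle_Y .
\]
With $h=2/(M+1)$ one has the key algebraic identity
\[
(j+m+1)(j-m)=\bigl(\tfrac{M+1}{2}\bigr)^2-\bigl(m+\tfrac12\bigr)^2=h^{-2}\bigl(1-(hm+h/2)^2\bigr).
\]
So the $m\to m+1$ matrix element of $J'_+$ equals $v_0(hm+h/2)/h$, and symmetrically the $m\to m-1$ element of $J'_-$ equals $v_0(hm-h/2)/h$.

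For $X$, I use $iX/2=iJ_x=\tfrac12(J'_+-J'_-)$. The $m$-th component of $iJ_x\psi$ receives $\tfrac12\cdot v_0(hm-h/2)h^{-1}\psi_{m-1}$ from $J'_+$ and $-\tfrac12\cdot v_0(hm+h/2)h^{-1}\psi_{m+1}$ from $J'_-$. This is exactly $D_{h,v_0}$.

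For $K$, note $XY+YX=4(J_xJ_y+J_yJ_x)$, so $\frac{iK}{M+1}=h\,i(J_xJ_y+J_yJ_x)$. Since $J_y$ is diagonal, the $(m,m-1)$ entry of $i(J_xJ_y+J_yJ_x)$ is the corresponding entry of $iJ_x$ multiplied by $(m-1)+m=2m-1$. After the factor $h$, the coefficient of $\psi_{m-1}$ becomes
\[
h(2m-1)\cdot\frac{v_0(hm-h/2)}{2h}=\frac{2(hm-h/2)\,v_0(hm-h/2)}{2h}=\frac{v_1(hm-h/2)}{2h}.
\]
The $(m,m+1)$ entry gives $-v_1(hm+h/2)/(2h)$ in the same way, with multiplier $2m+1$. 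Boundary rows are automatic, because $v_0(\pm1)=0$ at $hm\pm h/2=\pm1$, which matches the convention $\psi_m=0$ outside the range.

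The main obstacle is the phase convention. The paper fixes $|m\rangle_Y$ concretely as normalized symmetric sums of $|\pm\rangle=\frac1{\sqrt2}(|0\rangle\pm i|1\rangle)$. I must therefore check that in this basis $J_z+iJ_x$ acts with positive real coefficients, and not, say, with a factor $-1$ or $\pm i$. I would verify this on a single qubit by computing $(Z+iX)/2$ on $|-\rangle$ and $|+\rangle$. The symmetric-sum normalization $\binom{M}{p}^{-1/2}$ then yields the standard $\sqrt{(p+1)(M-p)}$ factor by counting subsets, which reproduces the square roots above. If a sign appears, I would either adjust by $\psi_m\mapsto(-1)^m\psi_m$ or note that it flips the sign convention of both identities consistently. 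This bookkeeping is the only delicate point.
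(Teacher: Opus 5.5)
Your proposal is correct and follows essentially the same route as the paper: $Y$-adapted ladder operators $J_z\pm iJ_x$, the identity $h^2(j-m)(j+m+1)=1-(hm+h/2)^2$, and the diagonal multiplier $2m\pm1$ coming from $J_y$ to pass from $X$ to $K$. The phase check you flag does go through, since $\frac12(Z+iX)|-\rangle=|+\rangle$ and $\frac12(Z+iX)|+\rangle=0$; the ladder coefficients are therefore positive reals in the paper's basis, and your sign-adjustment fallback is never needed.
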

\begin{rmk}\label{rmk:Lv}
For differentiable $v$, as $h\to0$, we can informally check the finite difference operator $D_{h,v}$ should approach the continuum operator
\begin{align}\label{eqn:Lv}
L_v&:=-v(y)\partial_y-\frac12v'(y),\quad y\in(-1,1).
\end{align}
We derive this continuum target in \eqref{eqn:DL0} at the end of the proof of the lemma below.
In Section~\ref{subsec:dc-error-proof} (Lemma~\ref{lem:DL-approx}), we will prove a quantitative bound on closeness of $D_{h,v}$ to $L_v$; equation~\eqref{eqn:Lv} is not meant at the moment to be a formal limiting statement about $D_{h,v}$.
Note also that $L_v$ is skew-hermitian, just like $D_{h,v}$, on $C_c^\infty(-1,1)$, though it may not be skew-adjoint.
\end{rmk}

\begin{proof}
Let $|\pm\rangle=\frac{1}{\sqrt{2}}(|0\rangle\pm i|1\rangle)$ be the eigenstates of single-qubit Pauli $Y_j=\sigma_y$, so that on a single qubit $j$, $X_j|\pm\rangle_j=\pm i|\mp\rangle$. 
In the $Y$-Dicke basis $|m\rangle_Y=\binom{M}{p}^{-1/2}\sum_{\substack{S\subseteq A\\|S|=p}}|+\rangle_S|-\rangle_{A\setminus S}$, where $p=m+M/2$, we see $X=\sum_{j=1}^MX_j$ is tri-diagonal in this basis since it only flips one $\pm$ at a time. Similarly, since $Y|m\rangle_Y=2m|m\rangle_Y$, $K$ is also tri-diagonal. Also, since $X$ always changes the number of $+$'s, we see $\langle m|X|m\rangle=0$.
Since $X=2J_x=\frac1i(J_+^Y-J_-^Y)$ for raising and lowering operators $J_\pm^Y:=J_z\pm iJ_x$, standard angular momentum formulas \cite[Sec. 3.5.3]{sakurai2020modern} (alternatively direct counting using $X|-\rangle=-i|+\rangle$) give
\begin{align*}
\langle m+1|X|m\rangle_Y &=-ic_m,\quad\langle m+1|K|m\rangle_Y=-i(2m+1)c_m,\numberthis
\end{align*}
for $c_m:=\sqrt{(M/2-m)(M/2+m+1)}$.
Letting $y_m=mh$ and recalling $h=2/(M+1)$, we can rewrite $c_m$ in terms of $y_m$ via
\begin{align*}
h^2c_m^2&=h^2\left[\left(\frac{M}{2}\right)^2+\frac{M}{2}-m^2-m\right] =h^2\left[\left(\frac{M+1}{2}\right)^2-(m+1/2)^2\right]=1-(y_m+h/2)^2.
\end{align*}
As $(X\psi)_m=\langle m|X|m-1\rangle\psi_{m-1}+\langle m|X|m+1\rangle\psi_{m+1}$, we then get
\begin{align*}
\left(\frac{iX}{2}\psi\right)_m&=\frac12\left[c_{m-1}\psi_{m-1}-c_m\psi_{m+1}\right]=(D_{h,v_0}\psi)_m,\\
\left(\frac{iK}{M+1}\psi\right)_m&=\frac{2}{M+1}\left[(m-1/2)c_{m-1}\psi_{m-1}-(m+1/2)c_m\psi_{m+1}\right]=(D_{h,v_1}\psi)_m,
\end{align*}
which proves the lemma. 

To see the continuum target is \eqref{eqn:Lv} in Remark~\ref{rmk:Lv}, let $y_m:=hm$ and $\psi(y_m):=\psi_m$. Consider $m=m(h)$ so that $y_m=hm\to y\in[-1,1]$.
Then write
\begin{align*}
(D_{h,v}\psi)_m&=\frac{v(y_m-h/2)}{2h}\left[\psi(h(m-1))-\psi(h(m+1))\right]+\left[\frac{v(y_m-h/2)}{2h}-\frac{v(y_m+h/2)}{2h}\right]\psi(h(m+1))\\
&\xrightarrow{h\to0}-v(y)\psi'(y)-\frac12v'(y)\psi(y),\numberthis\label{eqn:DL0}
\end{align*}
which identifies \eqref{eqn:Lv} as the continuum generator. 
\end{proof}

\subsection{Continuum time evolution to \texorpdfstring{$Y$}{Y} poles under TAT packet separation \texorpdfstring{$V_M$}{V_M}}

In this section, we prove a slightly stronger version of \eqref{eqn:ypoles}:

\begin{prop}[Continuum packet support]\label{prop:supp}
Fix $a\in(0,1/2)$, $\kappa\ge1$, and $b=1/2$.
Let $T_M^\pm:=e^{s_2L_{v_1}}e^{\pm\phi L_{v_0}}e^{-s_1L_{v_1}}$, with $L_v$ and $v_0,v_1$ as in Lemma~\ref{lem:Dhv}, and $e^{sL_v}$ defined as the solution to $\partial_t\psi_t=L_v\psi_t$ in \eqref{eqn:ctransport}. Let $f_{M,g}$ be the Gaussian core defined in \eqref{eqn:fm}. 
Then there are constants $M_0$ and $\epsilon>0$ such that for $M\ge M_0$,
\begin{align}\label{eqn:supp1}
\supp(T_M^+ f_{M,g})\subset\{1/3<y<1-\epsilon\},\quad\text{and}\quad\supp(T_M^- f_{M,g})\subset\{-1+\epsilon<y<-1/3\}.
\end{align}
\end{prop}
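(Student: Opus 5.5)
The proof tracks supports along characteristics, so no analysis of the profile is needed. For $v\in\{v_0,v_1\}$, the equation $\partial_t\psi_t=L_v\psi_t=-v\psi_t'-\frac12v'\psi_t$ is the transport equation for half-densities along the flow $\Phi^v_t$ of the ODE $\dot y=v(y)$. Its solution is
\begin{align*}
(e^{tL_v}\psi)(y)=\psi\bigl(\Phi^v_{-t}(y)\bigr)\,\bigl((\Phi^v_{-t})'(y)\bigr)^{1/2},
\end{align*}
which one checks by differentiating in $t$; this is the formula I expect behind \eqref{eqn:ctransport}. Since $v_0,v_1$ are smooth on $(-1,1)$ and vanish only at the endpoints or at $0$, the flows are diffeomorphisms of $(-1,1)$ that preserve order. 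Hence $\supp(e^{tL_v}f)=\Phi^v_t(\supp f)$, and an interval is mapped to the interval between the images of its endpoints. Initially $\supp f_{M,g}\subset[-bg\sigma_0,bg\sigma_0]$, so it suffices to follow the two endpoints through the three flows. The bounds for $T_M^-$ follow from those for $T_M^+$ by the symmetry $y\mapsto -y$: $v_1$ is odd, $v_0$ is even, so $\Phi^{v_1}$ commutes with reflection and reflection turns the rotation flow for time $-\phi$ into the one for time $+\phi$. I therefore treat only $T_M^+$.

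Step 1 is the squeezing $e^{-s_1L_{v_1}}$, which is the flow $\dot y=-2y\sqrt{1-y^2}$ run for time $s_1$. Writing $y=\sin\theta$ gives $\dot\theta=-2\sin\theta$, so $\tan(\theta/2)$ is multiplied by exactly $e^{-2s_1}$. For small $\theta$ this gives $|y(s_1)|\le |y_0|e^{-2s_1}(1+O(y_0^2))$. Since $y_0=O(g\sigma_0)$, the error $O(y_0^2)$ is $O(\log\log M/M)$. The key bookkeeping identity from the parameter table is
\begin{align*}
bg\sigma_0e^{-2s_1}=bg\sigma_*=\tfrac12\kappa\log M/M=\phi/2.
\end{align*}
So after Step 1 the support lies in $[-\tfrac{\phi}{2}(1+\eta),\tfrac{\phi}{2}(1+\eta)]$ with $\eta=O(\log\log M/M)$.

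Step 2 is the rotation $e^{\phi L_{v_0}}$. Solving $\dot y=\sqrt{1-y^2}$ exactly gives $\theta(t)=\theta_0+t$ with $y=\sin\theta$. The $\theta$-support therefore becomes
\begin{align*}
\bigl[\phi-\arcsin(\tfrac{\phi}{2}(1+\eta)),\ \phi+\arcsin(\tfrac{\phi}{2}(1+\eta))\bigr]\subset\bigl[\tfrac{\phi}{2}(1-o(1)),\ \tfrac{3\phi}{2}(1+o(1))\bigr],
\end{align*}
using $\arcsin x=x+O(x^3)$ and $\phi\to0$. In particular the packet has now moved entirely to $y>0$; this is exactly where the choice $b=1/2$ (the support radius $bg\sigma_*$ equals $\phi/2$) is used.

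Step 3 is the reverse squeezing $e^{s_2L_{v_1}}$. The same substitution gives $\tan(\theta/2)\mapsto e^{2s_2}\tan(\theta/2)=\phi^{-1}\tan(\theta/2)$. With $\tan(\theta/2)=\theta/2\,(1+O(\phi^2))$, the final values of $t:=\tan(\theta/2)$ lie in $[\tfrac14-o(1),\tfrac34+o(1)]$. Then $y=\sin\theta=2t/(1+t^2)$, which is increasing on $[0,1]$. It takes the values $8/17\approx0.47>1/3$ at $t=1/4$ and $24/25=0.96<1$ at $t=3/4$. Choosing for instance $\epsilon=1/50$ and $M_0$ large enough that the $o(1)$ terms are below the margins gives \eqref{eqn:supp1}.

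The only real obstacle is a rigorous version of the first paragraph. One must justify that $e^{sL_v}$ defined via \eqref{eqn:ctransport} really is the characteristic formula on $C_c^\infty(-1,1)$, and that supports stay compactly inside $(-1,1)$ so that the endpoint degeneracy of $v$ never matters. Both follow from the explicit monotone $\tan(\theta/2)$ solutions above, which keep compact subsets of $(-1,1)$ compact for all finite times. The remaining work is the careful but elementary tracking of the $o(1)$ errors through Steps 1–3.
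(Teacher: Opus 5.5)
Your proposal is correct and follows essentially the paper's proof: you push the support of $f_{M,g}$ through the three characteristic flows in the stereographic coordinate $w=\tan(\theta/2)$ (scaled by $e^{-2s_1}$, shifted by $\phi$ in $\theta$, which the paper writes with the tangent addition formula, then scaled by $e^{2s_2}=1/\phi$), use $bg\sigma_*=\phi/2$ to land in $w\in[\tfrac14-o(1),\tfrac34+o(1)]$, and conclude $8/17-o(1)\le y\le 24/25+o(1)$. One small correction: the flows are not global diffeomorphisms of $(-1,1)$ that keep compact sets compact for all finite times, since both $\Phi_0^t$ and $\Phi_1^t$ carry points to $\pm1$ in finite time (this is the paper's branch point); what actually rules this out is your own final bound $\tan(\theta/2)\le\tfrac34+o(1)$ together with the monotonicity of $e^{2r}\tan(\theta/2)$ in $r$, exactly as in the paper.
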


The main property used to prove the lemma is that we can explicitly evaluate the solution $(e^{tL_v}f)(y)$ for long times, up until a branch point at $y=\pm1$ in the solution. In particular, we can evaluate the time evolved state exactly well beyond the small $y$ approximation regime.
Avoiding the poles $y=\pm1$ is the reason for choosing $s_2$ in the protocol so that the support in \eqref{eqn:supp1} is bounded away from the actual $Y$ pole (cf.~Figure~\ref{fig:bloch}).
The calculations of the flows below are standard; for example, compare with the classical TAT dynamics along the separatrix in \cite{munoz2023phase}.

Let $\Phi_v^t$ be the classical flow satisfying the ordinary differential equation
\begin{align}\label{eqn:char}
\partial_t\Phi_v^t(y)=v(\Phi_v^t(y)),\quad \Phi_v^0(y)=y.
\end{align}
Recalling $L_v=-v\partial_y-\frac12v'$ and using the method of characteristics, we find the solution
\begin{align}
(e^{tL_v}f)(y)&:=[(\Phi_v^{-t})'(y)]^{1/2}f(\Phi_v^{-t}(y)),\label{eqn:ctransport}
\end{align}
where we take the right hand side as the definition for the notation $e^{tL_v}f$. The solution satisfies\footnote{This can be verified by using the identities $v(\Phi_v^{-t}(y))=(\Phi_v^{-t})'(y)v(y)$ (apply $\partial_s|_{s=0}$ to $\Phi_v^t(\Phi_v^s(y))=\Phi_v^{t+s}(y)$), along with its $\partial_y$ derivative, and $\partial_t(\Phi_v^{-t})'(y)=-v'(\Phi_v^{-t}(y))(\Phi_v^{-t})'(y)$ (apply $\partial_y$ to \eqref{eqn:char}).} $\partial_t(e^{tL_v}f)=L_v(e^{tL_v}f)$.
For $v=v_0$ or $v_1$, we may simplify some of the notation to write $\Phi_0^t:=\Phi_{v_0}^t$ and $\Phi_1^t:=\Phi_{v_1}^t$.
The right hand side is the standard formula for the pullback of $\Phi_v$ acting on half-densities \cite[\S9.1]{zworski2012semiclassical}.
For $v_{0}(y)=\sqrt{1-y^2}$ as in Lemma~\ref{lem:Dhv}, corresponding to the controlled $X$ rotation in the positive direction, we can solve $\partial_t\phi=v_0(\phi)$, $\phi(0)=y$, to obtain
\begin{align}\label{eqn:phi-w0}
\Phi_{0}^t(y)&=\sin(t+\arcsin y)=\sqrt{1-y^2}\sin t+y\cos t,
\end{align}
for $|t+\arcsin y|\le\pi/2$.
For the negative direction $X$ rotation, the solution is the same form but $t$ becomes $-t$.
For $v_1(y)=2y\sqrt{1-y^2}$ as in Lemma~\ref{lem:Dhv} (corresponding to TAT squeezing), using that $-\frac12\tanh^{-1}(\sqrt{1-\phi^2})$ is an antiderivative of $1/v_1(\phi)$ and that $\phi(0)=y$, we obtain
\begin{align}
\Phi_1^t(y)&=\frac{\sgn(y)}{\cosh(2t)\frac{1}{|y|}-\sinh(2t)\sqrt{\frac{1}{y^2}-1}}=\frac{2e^{2t}y(1+\sqrt{1-y^2})}{(e^{4t}-1)y^2+2(1+\sqrt{1-y^2})},
\end{align}
at least up to a branch point $y=\pm1$.
It will be convenient to set $y=\sin\theta$, and then use the stereographic projection coordinate (of the circle) $w:=\tan(\theta/2)$. Then
\begin{align}\label{eqn:w}
w(y):=\tan\left(\frac{\arcsin y}{2}\right)=\frac{y}{1+\sqrt{1-y^2}},\quad\text{and}\quad y(w)=\frac{2w}{1+w^2},
\end{align}
and
\begin{align}\label{eqn:phi-w}
\Phi_1^t(y)&=\frac{2e^{2t}w(y)}{1+e^{4t}w(y)^2},\quad w(\Phi_1^t(y))=e^{2t}w(y),
\end{align}
as long as $|e^{2t}w(y)|<1$ so that we avoid a branch point at $w$ or $y=\pm1$.
The last equality in \eqref{eqn:phi-w} is great, because it means that time evolution of a point $y$ by $\Phi_1^t$ is just linear multiplication by $e^{2t}$ when viewed in terms of the stereographic coordinate $w$.

\begin{proof}[Proof of Proposition~\ref{prop:supp}]
It is enough to prove the statement for $T_M^+$ only, since the negative version follows similarly by symmetry.
Using \eqref{eqn:ctransport} with the explicit solutions \eqref{eqn:phi-w0} and \eqref{eqn:phi-w}, it is straightforward to check how the support of $f_{M,g}$ changes under each step of $T_M^+$ to show \eqref{eqn:supp1}.
By \eqref{eqn:ctransport}, 
\begin{align}\label{eqn:suppL}
\supp(e^{tL_v}f)\subseteq\Phi_v^t(\supp(f)).
\end{align}
The initial packet $f_{M,g}$ is supported on $\{y:|y/\sigma_0|\le bg\}$. 
The support of $f_{M,g}$ thus changes as follows under the three steps of $T_M^+$.
\begin{enumerate}[1.]

\item First TAT squeezing: Let $y_0\in\supp(f_{M,g})$, and let $w_0:=w(y_0)$. Using \eqref{eqn:suppL}, we consider where $\Phi_{v_1}^{-s_1}$ sends $y_0$. Let $y_1:=\Phi_{v_1}^{-s_1}(y_0)$. By \eqref{eqn:phi-w}, this corresponds to the simple equation $w_1:=w(y_1)=e^{-2s_1}w_0$.

\item Controlled $X$ rotation: Starting from $w_1$, \eqref{eqn:suppL}, \eqref{eqn:phi-w0}, \eqref{eqn:w}, and the tangent angle-sum formula imply $w_1$ gets mapped to 
\begin{align*}
w_2&:=\tan\left(\frac{\arcsin(y_1)+\phi}{2}\right)=\frac{w_1+\tan(\phi/2)}{1-w_1\tan(\phi/2)}.
\end{align*}

\item Second TAT squeezing: Similar to the first TAT squeezing, we have $w_2$ sent to $w_3=w(y_3)=e^{2s_2}w_2$, with
\begin{align}\label{eqn:wy3}
w_3&=e^{2s_2}\frac{w_1+\tan(\phi/2)}{1-w_1\tan(\phi/2)},\quad y_3=\frac{2w_3}{1+w_3^2}.
\end{align}

\end{enumerate}
We just need to check that $1/3<y_3<1-\epsilon$ for sufficiently large $M$. We series expand the $w_i$s. Note that $w(y)=y/2+O(y^3)$, which is an adequate approximation when applied to $y_0\in\supp(f_{M,g})$ since it is near the $+Z$-pole, $y_0=O(\sigma_0bg)=o(1)$. Letting $u:=y_0/\sigma_0=y_0\sqrt{M}(1+o(1))$, the initial support of $f_{M,g}$ is $|u|\le bg$. Then using the explicit expressions for the parameters $s_1,\phi,s_2$, we get
\begin{align*}
w_1&=e^{-2s_1}w(y_0)=\frac{\kappa\log M}{M^{1/2}g}\frac{y_0}{2}(1+o(1))=\frac{\phi u}{2g}(1+o(1)),\\
w_3&=e^{2s_2}\frac{w_1+\tan(\phi/2)}{1-w_1\tan(\phi/2)}=\frac1\phi\frac{\frac{\phi u}{2g}(1+o(1))+\frac{\phi}{2}+O(\phi^3)}{1-\frac{\phi u}{2g}(1+o(1))O(\phi)}\\
&\hspace{3.6cm}=\left(\frac{u}{2g}+\frac{1}{2}\right)(1+o(1)),\numberthis\label{eqn:w-ug}
\end{align*}
using \eqref{eqn:phi-w} and provided that we have avoided the branch point (corresponding to $w=\pm1$) at the $Y$-poles. The $o(1)$ terms in \eqref{eqn:w-ug} depend on $g,M$, but are uniform over all $u$ in the support.
Since $|u|\le bg$, we have for $b=1/2$, $\frac14-o(1)\le w_3\le\frac34+o(1)$, so we avoid the branch point for sufficiently large $M$. Since $2w_3/(1+w_3^2)$ is increasing in $w_3$ for $0\le w_3\le1$, we obtain
\begin{align}
y_3=\frac{2w_3}{1+w_3^2}\ge \frac{\frac12-o(1)}{1+\frac{1}{16}+o(1)}=\frac{8}{17}+o(1),
\end{align}
which is $>1/3$ for sufficiently large $M$. Additionally, 
\begin{align}
y_3=\frac{2w_3}{1+w_3^2}\le\frac{\frac{3}{2}+o(1)}{1+\frac{9}{16}-o(1)}=\frac{24}{25}+o(1),
\end{align}
which is $\le1-\epsilon$ for sufficiently large $M$.
\end{proof}

\section{Remaining part of the proof of Theorem~\ref{thm:tat}: Discrete-continuum error bounds}\label{sec:dc-error}

In this section, we prove the final step of the proof of Theorem~\ref{thm:tat} listed in Section~\ref{sec:proof}, 
\begin{enumerate}
\item[(3)] Discrete-continuum error bounds.
\end{enumerate}
We need to estimate the difference between the discrete evolution and the continuum evolution of the Gaussian packet,
\begin{align*}
\|V_M^\pm S_hf_{M,g}-S_hT_M^\pm f_{M,g}\|=\|e^{s_2D_{h,v_1}}e^{\pm\phi D_{h,v_0}}e^{-s_1D_{h,v_1}}S_hf_{M,g}-S_he^{s_2L_{v_1}}e^{\pm\phi L_{v_0}}e^{-s_1L_{v_1}}f_{M,g}\|_{\ell^2},
\end{align*}
recalling $(S_hf)_m=\sqrt{h}f(hm)$ is the discretization of a function $f:I\to\C$ at scale $h=2/(M+1)$. We will prove
\begin{prop}[discrete-continuum bound]\label{prop:dcbounds}
Let $h=2/(M+1)$ and $g=\Theta(\sqrt{\log\log M})$, and let $D_{h,v}$ and $L_v$ be defined as in Lemma~\ref{lem:Dhv}. Then
\begin{align}\label{eqn:dc}
\|e^{s_2D_{h,v_1}}e^{\pm\phi D_{h,v_0}}e^{-s_1D_{h,v_1}}S_hf_{M,g}-S_he^{s_2L_{v_1}}e^{\pm\phi L_{v_0}}e^{-s_1L_{v_1}}f_{M,g}\|_{\ell^2}&\le O\left(\frac{g^3}{(\log M)^2}\right).
\end{align}
\end{prop}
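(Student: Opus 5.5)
We plan to prove \eqref{eqn:dc} by telescoping over the three factors and applying Duhamel's principle to each. For one step with generator $v$, time $t$, and input $F$ (a smooth, compactly supported half-density in $(-1,1)$), we write
\begin{align*}
e^{tD_{h,v}}S_hF-S_he^{tL_v}F=\int_0^t e^{(t-s)D_{h,v}}\bigl(D_{h,v}S_h-S_hL_v\bigr)e^{sL_v}F\,ds .
\end{align*}
Since $D_{h,v}$ is skew-adjoint by Lemma~\ref{lem:Dhv}, $e^{(t-s)D_{h,v}}$ is unitary. This gives
\begin{align*}
\|e^{tD_{h,v}}S_hF-S_he^{tL_v}F\|_{\ell^2}\le\int_0^t\|(D_{h,v}S_h-S_hL_v)F_s\|_{\ell^2}\,ds,\qquad F_s:=e^{sL_v}F .
\end{align*}
The full error in \eqref{eqn:dc} is then bounded by the sum of three such integrals. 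The outer unitaries in the telescoping sum are harmless. The inputs are $f_{M,g}$, then $e^{-s_1L_{v_1}}f_{M,g}$, then $e^{\phi L_{v_0}}e^{-s_1L_{v_1}}f_{M,g}$. By the proof of Proposition~\ref{prop:supp}, each $F_s$ stays supported in $\{|y|\le1-\epsilon\}$, so boundary effects at $m=\pm M/2$ and the branch points of $v_0,v_1$ never enter.

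The first ingredient is a consistency estimate for $D_{h,v}S_h-S_hL_v$, which we will state as Lemma~\ref{lem:DL-approx}. We Taylor expand $v(y\mp h/2)F(y\mp h)$ about $y=y_m$. Because $D_{h,v}$ is a symmetric (centered) difference, the $O(h)$ terms cancel and
\begin{align*}
(D_{h,v}S_hF-S_hL_vF)_m=\sqrt h\,h^2\Bigl[c_1v F'''+c_2v'F''+c_3v''F'+c_4v'''F\Bigr](y_m)+\text{remainder},
\end{align*}
with an integral remainder of the same form evaluated on an $h$-neighbourhood. Summing squares and comparing Riemann sums with integrals, we aim for
\begin{align*}
\|(D_{h,v}S_h-S_hL_v)F\|_{\ell^2}\le Ch^2\sum_{k=0}^3\bigl\|\,|v^{(3-k)}|\,\partial^kF\bigr\|_{L^\infty_{h}L^2},
\end{align*}
where $L^\infty_hL^2$ means a sup over shifts of size at most $h$ of the $L^2$ norm. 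Keeping the weight $v$ next to the top derivative is essential, not cosmetic.

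The second ingredient is transport estimates for $\partial^kF_s$, obtained from the explicit formula \eqref{eqn:ctransport} together with \eqref{eqn:phi-w0} and \eqref{eqn:phi-w}. In the stereographic coordinate $w$ the TAT flow is a dilation. Consequently, for a packet of width $\sigma$ near $y\approx0$ (while the width is small), we expect $\|\partial^kF_s\|_{L^2}\lesssim\sigma_s^{-k}$ with $\sigma_s=e^{-2s}\sigma_0$ during the first squeeze. The cutoff $\chi(u/g)$ contributes only factors $g^{-j}$, which are harmless, and the $O(\sigma^k)$ curvature corrections from the nonlinearity of $w(y)$ are lower order. Uniform $C^3$ bounds on the flows are available on $\{|y|\le1-\epsilon\}$.

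The heart of the argument is the bookkeeping of these two ingredients step by step, and we expect it to be the main obstacle. Near $y=0$ we have $v_1(y)=2y+O(y^3)$, so on the packet $v_1=O(\sigma)$, $v_1'=O(1)$, $v_1''=O(\sigma)$. The weighted consistency error is therefore $O(h^2\sigma_s^{-2})$ rather than the naive $h^2\sigma_s^{-3}$. In the first squeeze, $\int_0^{s_1}h^2\sigma_0^{-2}e^{4s}\,ds\lesssim(h/\sigma_*)^2=O(g^2/(\log M)^2)$, using $\sigma_*=\kappa\log M/(Mg)$. In the rotation step $v_0\approx1$ and $v_0''=O(1)$, so the error rate is $h^2\sigma_*^{-3}$ plus lower order; over time $\phi$ this gives
\begin{align*}
\phi h^2\sigma_*^{-3}=(h/\sigma_*)^2(\phi/\sigma_*)=O\bigl(g^2/(\log M)^2\bigr)\cdot g=O\bigl(g^3/(\log M)^2\bigr),
\end{align*}
which is the dominant term in \eqref{eqn:dc}. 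In the second squeeze the packet sits near $y\approx\phi$, where $v_1=O(\phi)$ is still small. Its width grows like $e^{2s}\sigma_*$ until it reaches $O(1/g)$ at macroscopic $y$. The rate $h^2(\sigma_s^{-2}+\phi\,\sigma_s^{-3})$ decreases exponentially in $s$, so this step is again $O(g^3/(\log M)^2)$; at late times, in the macroscopic region with $\sigma=\Theta(1/g)$, the rate is only $O(h^2g^3)$. The delicate point is this middle regime: we must track that the packet center is at $y\approx\phi e^{2s}$, with width $\sigma_*e^{2s}$ and $|v_1|\lesssim|y|$. We also must ensure that the nonlinear corrections to $(\Phi_1^{-s})'$ do not spoil the $\sigma^{-k}$ derivative scaling. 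We would handle this by computing $\partial_y^k$ of \eqref{eqn:ctransport} directly in the $w$-coordinate, where $\Phi_1^s$ is linear.
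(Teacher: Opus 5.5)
Your proposal is correct and follows the paper's proof essentially step for step: triangle inequality plus Duhamel for each of the three factors, a centered-difference consistency bound with the $v$-derivatives kept next to each derivative of $f$ (the paper's $P_3+hP_4$ in Lemma~\ref{lem:DL-approx}), derivative and weighted transport estimates read off from the explicit flows (the paper does this through the parametrization $f(Y(u))=[Y'(u)]^{-1/2}F_g(u)$ of Lemma~\ref{lem:Y}), and the same bookkeeping, in which the rotation step gives the dominant $O(g^3/(\log M)^2)$. One small correction concerns the second squeeze: the packet sits at $|y|\lesssim\phi e^{2s}=g\sigma_s$, so the $v_1F'''$ term contributes $h^2\phi e^{2s}\sigma_s^{-3}=gh^2\sigma_s^{-2}$ (the paper's $\mu=O(g)$), not $h^2\phi\,\sigma_s^{-3}$; this decays like $e^{-4s}$ rather than $e^{-6s}$, but it integrates to the same $O(g^3/(\log M)^2)$.
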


The proof will primarily be through various derivative and ODE transport estimates.
For $D$ one of the discrete generators and $L$ its continuum counterpart,
we will make use of the standard ODE Duhamel bound
\begin{align}\label{eqn:duhamel}
\|e^{tD}S_hf-S_he^{tL}f\|_{\ell^2}&\le \int_0^{t}\|D S_hf_s-S_hLf_s\|_{\ell^2}\,ds,\quad f_s:=e^{sL}f,\quad t\ge0,
\end{align}
which reduces the estimate to studying the differences between the generators. 
The proof of \eqref{eqn:duhamel} is by integrating the $s$-derivative of $e^{(t-s)D}S_he^{sL}f$, and noting that $e^{(t-s)D}$ is unitary since $D$ is skew-adjoint.
We will rigorously bound the relevant terms $\|D S_hf_s-S_hLf_s\|_{\ell^2}$ in Section~\ref{subsec:dc-error-proof}, but we will first give heuristic estimates for each of the three steps in $V_M$. 

\subsection{Heuristic explanation}\label{subsec:heuristic}
For the heuristics, we work only in the small $y$ regime. 
In this regime, TAT squeezing acts as stretching or contracting in the $x$ or $y$ direction by a factor $e^{\pm 2s}$.
In the final TAT step, where we move the packets to macroscopic $y$, this small $y$ approximation doesn't hold. But since the packets increase in $y$-width in this step, the continuum approximation is expected to improve, and so the error should be small. We rigorously bound everything in the following Section~\ref{subsec:dc-error-proof}.

Heuristically, letting $y=y_m=hm$, Taylor expansion (cf.~\eqref{eqn:DL0}) shows the finite-difference operator $D_{h,v}$ \eqref{eqn:Dhv} and its continuum approximation $L_v$ \eqref{eqn:Lv} roughly satisfy
\begin{align*}
(D_{h,v}S_hf)_m&\approx \begin{multlined}[t]S_h\Big[-v(y)f'(y)-\frac12v'(y)f(y)+\frac{h}{2}v'(y)f'(y)+h^2O(vf'''+v''f')\\
-\frac{h}{2}v'(y)f'(y)+h^2O(v'f''+v'''f)\Big]
\end{multlined}\\
&=S_hL_vf(y)+h^2O(f'''v+f''v'+f'v''+fv''').\numberthis
\end{align*}
For $f$ a packet with current $y$-width $\sigma$, then we have the scaling $f=\sigma^{-1/2}F(\cdot/\sigma)$ for a fixed/non-scaling function $F$, and so the $r$th derivative of $f$ scales as $f^{[r]}\sim \sigma^{-1}O(f^{[r-1]})$. Then for $O(1)$ size derivatives of $v$, and using $\|f\|_{\ell^2}=1$, the above becomes
\begin{align}
D_{h,v}S_hf&\approx S_hL_vf + h^2\sigma^{-3}O(|vf|)\quad\Longrightarrow\quad
\|D_{h,v}S_hf-S_hL_vf\|_{\ell^2}\lesssim h^2\sigma^{-3}O(|v|),\label{eqn:heuristic-DL}
\end{align}
where $\sigma$ indicates the current packet $y$-width, and $O(|v|)$ informally means the maximum typical value of $|v(y)|$ on the packet support.
Then starting from the target quantity in Proposition~\ref{prop:dcbounds} and using the triangle inequality followed by \eqref{eqn:duhamel} and \eqref{eqn:heuristic-DL}, we can heuristically estimate the error in each step as follows.
\begin{itemize}
\item First TAT unitary: The TAT squeezing sends $y$ to approximately $e^{-2s}y$, so a packet of width $\sigma_0$ becomes a packet of width approximately $\sigma(s):=e^{-2s}\sigma_0$. We also have $|v_1|\lesssim 2y\lesssim O(\sigma(s))$, since the typical $y$-coordinate for a packet centered at $y=0$ of width $\sigma(s)$ is $O(\sigma(s))$.
Then \eqref{eqn:heuristic-DL} gives error at time $s$ approximately $\lesssim h^2\sigma(s)^{-2}=h^2\sigma_0^{-2}e^{4s}$, and integrating with \eqref{eqn:duhamel} gives
\begin{align}
E_1\lesssim\int_0^{s_1}\frac{h^2}{\sigma_0^2}e^{4s}\,ds=O\left(\frac{g^2}{\kappa^2(\log M)^2}\right).
\end{align}

\item Controlled $X$ rotation: Here $\sigma(t)$ is fixed at $\sigma_*:=\sigma(s_1)=e^{-2s_1}\sigma_0=\kappa\frac{\log M}{Mg}$. Note $|v_0|\le1$. 
Then
\begin{align}
E_2\lesssim\int_0^\phi \frac{h^2}{\sigma_*^3}\,ds\lesssim \frac{\kappa\log M}{M}\frac{1}{M^2}\frac{M^3g^3}{\kappa^3(\log M)^3}=O\left(\frac{g^3}{\kappa^2(\log M)^2}\right).
\end{align}

\item Second TAT unitary: For small $y$, the TAT squeezing sends $y$ to approximately $e^{2s}y$. After the controlled $X$ rotation, the center of the packets at the end of the rotation are located at $y\approx\pm\phi=\pm g\sigma_*$. Thus the packet starts with width $\sigma_*$ and center $y\approx\pm\phi$, so we have $\sigma(s)\approx e^{2s}\sigma_*$, evolved center $y\approx e^{2s}\phi=\sigma(s)g$, and $|v_1|\lesssim 2y\lesssim O(\sigma(s)g)$, giving (if we assume the error bounds for macroscopic $y$ are also small)
\begin{align}
E_3&\lesssim\int_0^{s_2}\frac{h^2}{\sigma(s)^2}O(g)\,ds\lesssim\frac{g}{M^2\sigma_*^2}\int_0^{s_2}e^{-4s}\,ds=O\left(\frac{g^3}{\kappa^2(\log M)^2}\right).
\end{align}
Note however that for macroscopic $y$ values, the TAT map is noticeably nonlinear, and so $\sigma(s)=e^{2s}\sigma_*$ is not exactly the packet width due to Jacobian or derivative factors of the map; however the packet width can be shown to still be $\Theta(\sigma(s))$.
\end{itemize}
Thus the largest discrete-continuum approximation error term, which contributes to the TAT packet separation error \eqref{eqn:ypoles}, is $O\left(\frac{g^3}{(\log M)^2}\right)$, which matches with Proposition~\ref{prop:dcbounds}. The remaining errors in the TAT packet separation protocol are those from Proposition~\ref{prop:gstate}, which are $O(g^4/M)+O(1/(\log M)^2)$, which is smaller. Then the total error in Theorem~\ref{thm:tat}, as seen in Section~\ref{subsec:tat-proof}, is $O\left(\frac{g^3}{(\log M)^2}\right)$, as desired.

\subsection{Rigorous proof}\label{subsec:dc-error-proof}

In this section, we rigorously prove Proposition~\ref{prop:dcbounds} on the error estimates for the continuum approximation in each of the three steps of the TAT packet separation protocol. 
This consists of some higher order derivative and ODE transport estimates. 
Essentially, we just need to control the 3rd (and 4th) order error terms in the Taylor expansion \eqref{eqn:heuristic-DL}, which involves bounding 3rd and 4th order derivative terms for the evolved packet.

To start, we can apply the triangle inequality to get
\begin{align*}\numberthis\label{eqn:duhamel3}
\|e^{s_2D_{h,v_1}}e^{\pm\phi D_{h,v_0}}&e^{-s_1D_{h,v_1}}S_hf_{M,g}-S_he^{s_2L_{v_1}}e^{\pm\phi L_{v_0}}e^{-s_1L_{v_1}}f_{M,g}\|_{\ell^2} \\
&\le \begin{multlined}[t]
\|e^{s_2D_{h,v_1}}e^{\pm\phi D_{h,v_0}}(e^{-s_1D_{h,v_1}}S_hf_{M,g})-e^{s_2D_{h,v_1}}e^{\pm\phi D_{h,v_0}}(S_he^{-s_1L_{v_1}}f_{M,g})\|_{\ell^2}\\
+\|e^{s_2D_{h,v_1}}(e^{\pm\phi D_{h,v_0}}S_he^{-s_1L_{v_1}}f_{M,g}) - e^{s_2D_{h,v_1}}(S_he^{\pm\phi L_{v_0}}e^{-s_1L_{v_1}}f_{M,g})\|_{\ell^2}\\
+\|e^{s_2D_{h,v_1}}S_he^{\pm\phi L_{v_0}}e^{-s_1L_{v_1}}f_{M,g}-S_he^{s_2L_{v_1}}e^{\pm\phi L_{v_0}}e^{-s_1L_{v_1}}f_{M,g})\|_{\ell^2}.
\end{multlined}
\end{align*}
Let
\begin{align}\label{eqn:f123}
f_s^{(1)}&:=e^{-sL_{v_1}}f_{M,g},\quad 
f_s^{(2)}:=e^{\pm sL_{v_0}}(e^{-s_1L_{v_1}}f_{M,g}),\quad\text{and}\quad
f_s^{(3)}:=e^{sL_{v_1}}(e^{\pm\phi L_{v_0}}e^{-s_1L_{v_1}}f_{M,g}).
\end{align}
Using that $e^{sD_{h,v}}$ is unitary and applying Duhamel's formula \eqref{eqn:duhamel}, we get that \eqref{eqn:duhamel3} is bounded above by
\begin{multline}\label{eqn:int3}
\int_0^{s_1}\|S_hL_{v_1}f_{s}^{(1)} - D_{h,v_1}S_h f_{s}^{(1)} \|_{\ell^2}\,ds
+\int_0^{\phi}\|D_{h,v_0}S_h f_{s}^{(2)}-S_hL_{v_0}f_{s}^{(2)}\|_{\ell^2}\,ds \\
+\int_0^{s_2}\| S_hL_{v_1}f_s^{(3)}-D_{h,v_1}S_hf_s^{(3)}\|_{\ell^2}\,ds.
\end{multline}

In order to estimate the norms in \eqref{eqn:int3}, we start with a lemma which gives useful formulas for $f_s^{(1)},f_s^{(2)},f_s^{(3)}$.
Recall the initial packet $f_{M,g}$ is given by $f_{M,g}(\sigma_0u)=\sigma_0^{-1/2}F_g(u)$, as defined in \eqref{eqn:fm} in Section~\ref{subsec:g0}.
\begin{lem}[packet evolution]\label{lem:Y}
Let $\Phi_{(j)}^s$, $j=1,2,3$, be the classical flows \eqref{eqn:char} corresponding to the 3 stages in the TAT-rotate-TAT protocol, with continuum generators denoted by $L^{(j)}$.
The time-evolved $y$-coordinate of an initial point $y_0=\sigma_0u$ under the classical flows is given by $Y_{1,s}(u)=\Phi_{(1)}^s(\sigma_0u)$, and for $j=2,3$, $Y_{j,s}(u)=\Phi_{(j)}^s(Y_{j-1,*}(u))$, where $*$ denotes the endpoint value of $s$.
For $f^{(1)}_s,f^{(2)}_s,f^{(3)}_s$ as in \eqref{eqn:f123}, 
\begin{align}\label{eqn:fY}
f_s^{(j)}(Y_{j,s}(u)) =[Y_{j,s}'(u)]^{-1/2}F_g(u),\quad \text{any $|u|\le bg$}.
\end{align}
\end{lem}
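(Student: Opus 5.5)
The plan is to reduce everything to the half-density transport formula \eqref{eqn:ctransport} and the group property of the classical flows. First I will record the basic composition rule: if $f_s=e^{sL_v}f$, then \eqref{eqn:ctransport} evaluated at the point $y=\Phi_v^s(z)$ gives
\begin{align*}
f_s(\Phi_v^s(z))=[(\Phi_v^{-s})'(\Phi_v^s(z))]^{1/2}f(z)=[(\Phi_v^s)'(z)]^{-1/2}f(z),
\end{align*}
where the second equality comes from differentiating $\Phi_v^{-s}\circ\Phi_v^s=\mathrm{id}$ with the chain rule. This holds for every $z$ for which the flow avoids the branch points $y=\pm1$ up to time $s$. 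For the negative-time steps ($-s_1$ in stage 1, and $-\phi$ in stage 2 with the minus sign) I will use the same identity with $v$ replaced by $-v$. Correspondingly, $\Phi_{(j)}^s$ denotes the flow of the velocity field actually generating stage $j$, i.e.\ $-v_1$, $\pm v_0$, and $v_1$.

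Next I will argue by induction on $j$.

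\textbf{Case $j=1$.} Take $f=f_{M,g}$ and $z=\sigma_0u$. Since $Y_{1,s}(u)=\Phi_{(1)}^s(\sigma_0u)$, we have $Y_{1,s}'(u)=\sigma_0(\Phi_{(1)}^s)'(\sigma_0u)$. Also $f_{M,g}(\sigma_0u)=\sigma_0^{-1/2}F_g(u)$. Substituting both into the composition rule, the two factors of $\sigma_0$ combine to give $f_s^{(1)}(Y_{1,s}(u))=[Y_{1,s}'(u)]^{-1/2}F_g(u)$.

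\textbf{Induction step $j-1\to j$.} The initial data of $f_s^{(j)}$ is $f^{(j-1)}_{*}$, the endpoint of the previous stage, and by hypothesis $f^{(j-1)}_*(Y_{j-1,*}(u))=[Y_{j-1,*}'(u)]^{-1/2}F_g(u)$. Apply the composition rule with $z=Y_{j-1,*}(u)$. This gives
\begin{align*}
f_s^{(j)}(Y_{j,s}(u))=[(\Phi_{(j)}^s)'(Y_{j-1,*}(u))]^{-1/2}[Y_{j-1,*}'(u)]^{-1/2}F_g(u),
\end{align*}
and the chain rule for $Y_{j,s}=\Phi_{(j)}^s\circ Y_{j-1,*}$ combines the two Jacobians into $[Y_{j,s}'(u)]^{-1/2}$.

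The only step that needs care is well-definedness: for each $|u|\le bg$ and each $s$ in the relevant interval, the trajectory must stay away from $y=\pm1$, and all derivatives must be positive so the square roots are real. I will take positivity from the fact that flow maps of smooth ODEs are orientation preserving diffeomorphisms where defined, so $(\Phi^s)'>0$.

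Avoidance of the poles I will read off from the proof of Proposition~\ref{prop:supp}, via the explicit $w$-coordinates:
\begin{itemize}
\item In stage 1, $w$ only shrinks, from $w_0$ to $e^{-2s}w_0$.
\item In stage 2, $\arcsin y$ moves by at most $\phi=o(1)$ starting from $o(1)$.
\item In stage 3, $w$ grows monotonically as $e^{2s}w_2$ up to $w_3\le 3/4+o(1)<1$, uniformly in $|u|\le bg$ for large $M$.
\end{itemize}
So $|w|<1$ is maintained throughout for $M\ge M_0$. This justifies using \eqref{eqn:ctransport} pointwise and completes the proof.
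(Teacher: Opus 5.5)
Your proof is correct and follows essentially the same route as the paper. The paper also evaluates \eqref{eqn:ctransport} at $y=Y_{j,s}(u)$ and uses the chain-rule identity $(\Phi_{(j)}^{-s})'(Y_{j,s}(u))\,Y_{j,s}'(u)=Y_{j-1,*}'(u)$, with $\sigma_0$ playing the role of $Y_{j-1,*}'(u)$ when $j=1$; this is your composition rule written with the inverse Jacobian, and your checks that trajectories avoid $y=\pm1$ and that the Jacobians are positive are details the paper leaves implicit.
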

\begin{proof}
Note we can write $f^{(j)}_s=e^{sL^{(j)}}f^{(j)}_0$.
The case $j=1$ follows from \eqref{eqn:ctransport} and the change of variables $y=Y_{1,s}(u)$. Since $Y_{1,s}(u)=\Phi_{(1)}^s(\sigma_0u)$, then $(\Phi_{(1)}^{-s})'(Y_{1,s}(u))\cdot Y'_{1,s}(u)=\sigma_0$, and we obtain
\begin{align*}
f_s^{(1)}(Y_{1,s}(u))\equiv(e^{sL^{(1)}}f_{M,g})(Y_{1,s}(u))&=[(\Phi_{(1)}^{-s})'(Y_{1,s}(u))]^{1/2}\sigma_0^{-1/2}F_g(\Phi_{(1)}^{-s}(Y_{1,s}(u))/\sigma_0)\\
&=[Y'_{1,s}(u)]^{-1/2}F_g(u).\numberthis\label{eqn:Y1}
\end{align*}
For $j=2$, we have $Y_{2,t}(u)=\Phi_{(2)}^t(Y_{1,s_1}(u))$, and so $(\Phi_{(2)}^{-t})'(Y_{2,t}(u))\cdot Y_{2,t}'(u)=Y_{1,s_1}'(u)$. This gives
\begin{align*}
f_t^{(2)}(Y_{2,t}(u))\equiv(e^{tL^{(2)}}f^{(1)}_{s_1})(Y_{2,t}(u))&=[(\Phi_{(2)}^{-t})'(Y_{2,t}(u))]^{1/2}f^{(1)}_{s_1}(\Phi_{(2)}^{-t}(Y_{2,t}(u)))\\
&= [Y_{2,t}'(u)]^{-1/2}F_g(u),\numberthis
\end{align*}
where in the last equality we also used \eqref{eqn:Y1} to evaluate $f^{(1)}_{s_1}$ at $\Phi_{(2)}^{-t}(Y_{2,t}(u))=Y_{1,s_1}(u)$.
The case $j=3$ follows in the same way.
\end{proof}

Next, the following Lemma~\ref{lem:DL-approx}(ii) essentially gives the rigorous version of the heuristic estimate \eqref{eqn:heuristic-DL}.
\begin{lem}[$D_{h,v},L_v$ comparison]\label{lem:DL-approx}
Let $f:[-1,1]\to\C$ be $C^4$. Suppose $\supp f$ is contained in a compact subset of $(-1,1)$, and that $h$ is small enough so the $h$-neighborhood of $\supp f$ is also contained in a compact subset $K\subset(-1,1)$.
\begin{enumerate}[(i)]
\item For $v$ $C^4$ in a neighborhood of the $h$-neighborhood of $\supp f$, we have the general bound
\begin{align}\label{eqn:DLbound}
\|D_{h,v}S_hf-S_hL_vf\|_{\ell^2}&\le Ch^2(P_{3}(v,f;h) + hP_4(v,f;h)),
\end{align}
where $P_3$ and $P_4$ are third and fourth order derivative norms defined via
\begin{align}\label{eqn:Pk}
P_k(v,f;h)&:=\sup_{|t|\le h}\sum_{\substack{a+b=k\\a,b\ge0}}\|v^{[a]}(\cdot+t/2)f^{[b]}(\cdot+t)\|_2,
\end{align}
where the notation $f^{[b]}$ denotes the $b$th derivative of a function $f$. 

\item  Suppose $f(Y(u))=[Y'(u)]^{-1/2}F(u)$ for some $F\in C^4(I)$ with $I\subset\R$ a compact interval, $Y:I\to(-1,1)$ an increasing $C^5$ function, and $f=0$ outside of $Y(I)$. Suppose that
\begin{align}\label{eqn:F4r}
\sum_{r=0}^4\|F^{[r]}\|_2=O(1),\quad\sum_{r=0}^4\|uF^{[r]}\|_2=O(1),
\end{align}
where $F^{[r]}$ denotes the $r$th derivative of $F$, and also that for some $h\le\sigma\le1$ and $\mu\ge0$, over $u\in I$, 
\begin{align}\label{eqn:Yd}
Y'=\Theta(\sigma),\quad |Y^{[r]}(u)|=O(\sigma)\text{ for $r=2,3,4,5$},\quad\text{and } |Y(u)|\le C\sigma(|u|+\mu).
\end{align}
Then for $v_0(y)=\sqrt{1-y^2}$ and $v_1(y)=2y\sqrt{1-y^2}$ as in \eqref{eqn:Dv}, we have
\begin{align}\label{eqn:dl-scaling}
\begin{aligned}
\|D_{h,v_0}S_hf-S_hL_{v_0}f\|_{\ell^2}&\le C_1\sigma^{-3}h^2,\\
\|D_{h,v_1}S_hf-S_hL_{v_1}f\|_{\ell^2}&\le C_2\max(1,\mu)\sigma^{-2}h^2,
\end{aligned}
\end{align}
with the constants $C_1,C_2$ depending only on the constants in \eqref{eqn:F4r} and \eqref{eqn:Yd}, and on the compact $y$-interval $K$.
\end{enumerate}
\end{lem}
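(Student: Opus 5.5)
For (i), I would Taylor expand each term of $(D_{h,v}S_hf)_m-(S_hL_vf)_m$ about $y=y_m=hm$. Write $\sqrt h^{-1}(D_{h,v}S_hf)_m=\frac{1}{2h}\bigl[v(y-h/2)f(y-h)-v(y+h/2)f(y+h)\bigr]$. Set $G(t):=v(y+t/2)f(y+t)$. Then this quantity is $-\frac{1}{2h}[G(h)-G(-h)]$, which is an odd central difference. Taylor's theorem with integral remainder gives
\[
-\frac{1}{2h}[G(h)-G(-h)]=-G'(0)-\frac{h^2}{6}\int_0^1 (1-\theta)^2\,\tfrac12\bigl[G'''(\theta h)+G'''(-\theta h)\bigr]\,d\theta .
\]
Since $G'(0)=\frac12v'(y)f(y)+v(y)f'(y)=-(L_vf)(y)$, the leading term matches exactly. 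By Leibniz, $G'''(t)$ is a sum of terms $2^{-a}\binom{3}{a}v^{[a]}(y+t/2)f^{[b]}(y+t)$ with $a+b=3$.

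The discrete $\ell^2$ norm of $\sqrt h\,(\cdot)(y_m)$ must then be compared with continuum $L^2$ norms. I would apply Cauchy--Schwarz in $\theta$ and then sum over $m$. To convert $h\sum_m|H(y_m+\cdot)|^2$ into $\|H\|_2^2$, I would use $|H(y_m)|^2\le \frac1h\int_{I_m}|H|^2+\int_{I_m}|(|H|^2)'|$ on cells $I_m$ of length $h$. This yields $\sum_m h|H(y_m)|^2\le \|H\|_2^2+2h\|H\|_2\|H'\|_2$, which is where the $hP_4$ term arises, since $H'$ involves fourth-order products $v^{[a]}f^{[b]}$ with $a+b=4$. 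The supremum over $|t|\le h$ in $P_k$ absorbs the shifts $\pm\theta h$. The support assumptions keep everything away from the singularities of $v$ at $\pm1$, and they make the boundary indices $m=\pm M/2$ harmless because $f$ vanishes there. Together these give \eqref{eqn:DLbound}.

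For (ii), I would bound $P_3,P_4$ using the representation $f(Y(u))=[Y'(u)]^{-1/2}F(u)$. Differentiate in $y$ via $\partial_y=(Y')^{-1}\partial_u$. By induction, $f^{[b]}(Y(u))$ is a sum of terms of the form $(Y')^{-1/2-b-j}\,(\prod Y^{[r_i]})\,F^{[c]}(u)$. Using $Y'=\Theta(\sigma)$ and $|Y^{[r]}|=O(\sigma)$, each such term is $O(\sigma^{-1/2-b})\sum_{c\le b}|F^{[c]}(u)|$. The worst factors $\sigma/\sigma^2$ are fine because $\sigma\le1$ only loses powers in the favorable direction; this should be checked carefully, and it is the main bookkeeping step. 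Changing variables $dy=Y'du$ gives $\|v^{[a]}f^{[b]}\|_2\lesssim\sigma^{-b}\bigl(\sum_c\|(v^{[a]}\circ Y)F^{[c]}\|_2\bigr)$.

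For $v_0$, all derivatives $v_0^{[a]}$ are bounded on $K$, so $P_3\lesssim\sigma^{-3}$ and $hP_4\lesssim h\sigma^{-4}\le\sigma^{-3}$ since $h\le\sigma$. This gives the first estimate of \eqref{eqn:dl-scaling}. For $v_1$, the key gain comes from $v_1(y)=O(|y|)$: the $a=0$ term has $|v_1(Y(u))|\le C\sigma(|u|+\mu)$, which kills one power of $\sigma$. This is where the second condition in \eqref{eqn:F4r} on $uF^{[r]}$ is used. Terms with $a\ge1$ have bounded $v_1^{[a]}$ but only $b\le2$ derivatives on $f$, giving $\sigma^{-2}$. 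Hence $P_3\lesssim\max(1,\mu)\sigma^{-2}$, and similarly $hP_4\lesssim h\max(1,\mu)\sigma^{-3}\le\max(1,\mu)\sigma^{-2}$.

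One technical point is the shifts $t$ in $P_k$: the evaluation points $y+t$ and $y+t/2$ differ by $O(h)\le O(\sigma)$. I would handle this by writing $y+t=Y(u+\tau)$ with $|\tau|\lesssim h/\sigma\le C$ and noting that $F$ and $Y$ estimates are stable under $O(1)$ shifts in $u$. Alternatively, I would bound $|v_1(y+t/2)|\le|v_1(y+t)|+Ch$, and use $h\le\sigma$ to absorb the extra term.
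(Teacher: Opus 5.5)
Your proposal is correct and follows essentially the same route as the paper. Part (i) Taylor-expands $t\mapsto v(y+t/2)f(y+t)$ with integral remainder, then uses a cell-by-cell sampling inequality to pass from $L^2$ to $\ell^2$, which produces the $hP_4$ term. Part (ii) uses the chain-rule scaling $\|f^{[b]}\|_2=O(\sigma^{-b})$, together with $|v_1(y)|\lesssim|y|\le C\sigma(|u|+\mu)$ for the $a=0$ term; the paper also uses $|v_1''|,|v_1^{[4]}|\lesssim|y|$, which your cruder bound for $a\ge1$ shows is unnecessary. The only slip is cosmetic: the remainder should be $-\frac{h^2}{4}\int_0^1(1-\theta)^2[G'''(\theta h)+G'''(-\theta h)]\,d\theta$ rather than $-\frac{h^2}{12}\int_0^1(1-\theta)^2[G'''(\theta h)+G'''(-\theta h)]\,d\theta$, and this does not affect the bound.
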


\begin{proof}
(i) Let 
\begin{align*}
(A_{h,v}f)(y):=\frac{v(y-h/2)f(y-h)-v(y+h/2)f(y+h)}{2h},
\end{align*}
so that $D_{h,v}S_hf=S_hA_{h,v}f$. 
In terms of $H_y(t):=v(y+t/2)f(y+t)$, we have
\begin{align}
A_{h,v}f(y)=\frac{H_y(-h)-H_y(h)}{2h},\quad\text{and}\quad L_vf(y)=-H_y'(0).
\end{align}
Taylor's integral remainder theorem applied to $H_y(\pm h)$ with 3rd order remainder gives
\begin{align}\label{eqn:taylor-remainder}
A_{h,v}f(y)-L_vf(y)&=-\frac{h^2}{4}\int_0^1(1-t)^2[H_y'''(ht)+H_y'''(-ht)]\,dt,
\end{align}
and so we obtain the $L^2$ estimate
\begin{align}\label{eqn:AL}
\|A_{h,v}f-L_vf\|_2&\le Ch^2P_3(v,f;h).
\end{align}

To turn this into an $\ell^2$ estimate with $S_h$, we note that for $r\in C^1(I)$ and any grid of spacing $h$ on an interval $I=[a,a+hn]$, 
\begin{align}\label{eqn:rsampling}
h\sum_{m=1}^n|r(z_m)|^2&\le 2\|r\|_{L^2}^2+2h^2\|r'\|_{L^2}^2,
\end{align}
where $z_m:=a+hm$. This follows by bounding each $|r(z_m)|^2$ using the fundamental theorem of calculus and Cauchy--Schwarz, as
\begin{align*}
|r(z_m)|^2=\left|r(x)+\int_{x}^{z_m}r'(t)\,dt\right|^2
&\le 2|r(x)|^2+2\left|\int_{z_{m-1}}^{z_m}r'(t)\,dt\right|^2\\
&\le 2|r(x)|^2+2h\int_{z_{m-1}}^{z_m}|r'(t)|^2\,dt,\quad\text{ any $x\in[z_{m-1},z_m]$}.\numberthis
\end{align*}
Integrating over $x\in[z_{m-1},z_m]$ and summing over $m$ gives \eqref{eqn:rsampling}.

Thus
\begin{align}
\|S_hA_{h,v}f-S_hL_vf\|_{\ell^2}&\le \sqrt{2}\|A_{h,v}f-L_vf\|_2+\sqrt{2}h\|\partial_y(A_{h,v}f-L_vf)\|_2.
\end{align}
Differentiating \eqref{eqn:taylor-remainder} with respect to $y$ then gives
\begin{align}
\|\partial_y(A_{h,v}f-L_vf)\|_2&\le Ch^2P_4(v,f;h).
\end{align}
With \eqref{eqn:AL}, this implies \eqref{eqn:DLbound}.

\vspace{2mm}
(ii) To prove \eqref{eqn:dl-scaling}, we apply \eqref{eqn:DLbound}, so we need to estimate $P_3$ and $P_4$. 
We first estimate the derivatives of $f$. Let $Y'(u)=:\sigma q(u)$, so $q=\Theta(1)$ and $q^{[r]}=O(1)$ for $r=1,2,3,4$. This also implies $(q^{-1})^{[r]}=O(1)$ for $r\le4$, for example by successive chain rules. Writing $f(Y(u))=\sigma^{-1/2}q(u)^{-1/2}F(u)$ and differentiating with respect to $u$, we get
\begin{align*}
f'(Y(u))=\sigma^{-1}q(u)^{-1}\sigma^{-1/2}\partial_u(q(u)^{-1/2}F(u)).
\end{align*}
Continuing to differentiate, we see
\begin{align}
f^{[r]}(Y(u))&=\sigma^{-r-1/2}\sum_{j=0}^rb_{j,r}(u)F^{[j]}(u),
\end{align}
where $b_{j,r}(u)$ is a polynomial in $q,q^{-1/2}$, and derivatives of $q$.
Using \eqref{eqn:F4r} and \eqref{eqn:Yd}, the bounds on $q$ and its derivatives, and the change of variables $y=Y(u)$, $dy=\sigma q(u)\,du$, then gives for $r=1,2,3,4$,
\begin{align}\label{eqn:fderiv}
\|f^{[r]}\|_2&=O(\sigma^{-r}), \quad \|yf^{[r]}\|_2=O(\sigma^{1-r}(1+\mu)).
\end{align}

We next look for pointwise bounds on $v_0^{[a]},v_1^{[a]}$. Since we work on a compact interval $K$ in $(-1,1)$, the first four derivatives of $v_0,v_1$ are bounded on $K$. For $v_0$ we thus obtain
\begin{align}
P_3(v_0,f;h)&\le O(\sigma^{-3}),\quad P_4(v_0,f;h)\le O(\sigma^{-4}),
\end{align}
which gives the first equation of \eqref{eqn:dl-scaling} using \eqref{eqn:DLbound} and that $h\sigma^{-1}\le1$. This will be good enough for the controlled $X$ rotation estimate.

For $v_1$, we will need to do a better estimate. Since the Taylor expansion of $v_1$ at zero has only odd power terms, or by direct differentiation, the even derivatives satisfy $|v_1^{[2]}(y)|\le C_K|y|$ and $|v_1^{[4]}(y)|\le C_K|y|$ on the compact $y$-interval $K\subset(-1,1)$ which contains $Y(I)$ and its $h/2$ neighborhood for sufficiently small $h$. Using this pointwise bound for $v^{[a]}$ in \eqref{eqn:Pk} when $a$ is even ($a\in\{0,2,4\}$) gives for $|t|\le h$,
\begin{align*}
\|v^{[a]}(\cdot+t/2)f^{[b]}(\cdot+t)\|_2=\|v^{[a]}(\cdot-t/2)f^{[b]}(\cdot)\|_2
&\le C_K\|(|y|+h)f^{[b]}(y)\|_2\\
&=O(\sigma^{1-b}(1+\mu)+h\sigma^{-b})=O((1+\mu)\sigma^{-b+1}),\numberthis
\end{align*}
since $h\le\sigma$.
For $a$ odd, we just use a constant pointwise upper bound on $|v^{[a]}(y)|$, which gives the bound $\|v^{[a]}(\cdot+t/2)f^{[b]}(\cdot+t)\|_2\le O(\sigma^{-b})$. 
Thus for $k=3,4$ and considering $a+b=k$, we get
\begin{align*}
P_k(v_1,f;h)&\le O(\sigma^{-k+1})+O(\mu\sigma^{-k+1})=O(\max(\mu,1)\sigma^{-k+1}),\numberthis
\end{align*}
which using \eqref{eqn:DLbound} and $h\sigma^{-1}\le1$ gives the second equation of \eqref{eqn:dl-scaling}.
\end{proof}

We next want to apply Lemma~\ref{lem:DL-approx}(ii) to the functions in \eqref{eqn:int3}. This is essentially the rigorous version of the estimates for each protocol step following \eqref{eqn:heuristic-DL}. Here we verify the conditions in Lemma~\ref{lem:DL-approx}(ii) hold with the appropriate scales $\sigma=\sigma(s)$. Note that due to derivative/Jacobian terms in the TAT flow, this $\sigma$ may not be the precise packet width when we move away from the initial pole; however, we will see the packet width is still $\Theta(\sigma)$.

\begin{prop}[TAT protocol $D_{h,v},L_v$ bounds]\label{prop:tat-dl}
Consider the setting of Lemma~\ref{lem:Y} and Lemma~\ref{lem:DL-approx}(ii). For the three steps of the TAT packet separation protocol and functions $f_s^{(1)}$, $f_s^{(2)}$, and $f_s^{(3)}$ in \eqref{eqn:f123}, we have:
\begin{align}\label{eqn:array}
\begin{array}{lllll}
\text{step} & f & Y  & \sigma &\mu \\\hline
\text{first TAT} & f_s^{(1)} & Y_{1,s} &\sigma(s)=e^{-2s}\sigma_0 & O(1)\\
\text{controlled rotation} &f_s^{(2)} & Y_{2,s} & \sigma(s)=\sigma_* & O(g) \\
\text{second TAT}& f_s^{(3)} & Y_{3,s} &\sigma(s)=e^{2s}\sigma_*& O(g)
\end{array}.
\end{align}
Thus
\begin{align}\label{eqn:normbound3}
\begin{aligned}
\|D_{h,v_1}S_hf_s^{(1)}-S_hL_{v_1}f_s^{(1)}\|_{\ell^2}&\le O(\sigma_0^{-2}h^2)e^{4s},\\
\|D_{h,v_0}S_hf_s^{(2)}-S_hL_{v_0}f_s^{(2)}\|_{\ell^2} &\le O(\sigma_*^{-3}h^2), \\
\|D_{h,v_1}S_hf_s^{(3)}-S_hL_{v_1}f_s^{(3)}\|_{\ell^2} &\le O(g\sigma_*^{-2}h^2)e^{-4s}.
\end{aligned}
\end{align}
\end{prop}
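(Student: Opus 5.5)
The plan is to verify, for each of the three stages, the hypotheses of Lemma~\ref{lem:DL-approx}(ii) with the scales listed in \eqref{eqn:array}. Once they hold, \eqref{eqn:normbound3} follows by substitution into \eqref{eqn:dl-scaling}. The function side is handled by Lemma~\ref{lem:Y}: each $f_s^{(j)}$ has the form $f(Y(u))=[Y'(u)]^{-1/2}F_g(u)$ on $I=[-bg,bg]$. It remains to check \eqref{eqn:F4r} for $F_g$ and \eqref{eqn:Yd} for $Y_{j,s}$.

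\textbf{Conditions on $F_g$.} We have $F_g(u)=\chi(u/g)e^{-u^2/4}/(2\pi)^{1/4}$. Derivatives of $\chi(u/g)$ carry factors $g^{-k}$, and polynomial-times-Gaussian factors are in $L^2$ uniformly. Hence $\sum_r\|F_g^{[r]}\|_2$ and $\sum_r\|uF_g^{[r]}\|_2$ are $O(1)$ uniformly in $g\ge1$.

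\textbf{Flow maps, using the stereographic coordinate.} By \eqref{eqn:w} and \eqref{eqn:phi-w}, each TAT flow acts on $w$ by multiplication by $e^{\pm2s}$. The rotation acts on $w$ by the Möbius map $w\mapsto(w+\tau)/(1-w\tau)$ with $\tau=\tan(t/2)$. So in every stage we can write $Y_{j,s}(u)=y(W_{j,s}(u))$ with $y(w)=2w/(1+w^2)$. Here $W_{j,s}(u)$ is an explicit composition of $w_0(u)=w(\sigma_0u)$ with linear or Möbius maps.

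\emph{Stage 1.} $W_{1,s}=e^{-2s}w(\sigma_0u)$, and $|w|=O(\sigma_0g)=o(1)$ on $I$. Then $Y'_{1,s}=e^{-2s}\sigma_0(1+o(1))$, and the $r$th derivative is $O(e^{-2s}\sigma_0\cdot\sigma_0^{r-1})$. We also get $|Y_{1,s}|\le C\sigma(s)|u|$, so $\mu=O(1)$.

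\emph{Stage 2.} $W_{2,t}=(W_{1,s_1}+\tau)/(1-W_{1,s_1}\tau)$ with $\tau\le\phi/2$ small. Differentiating the Möbius map gives factors $1+O(\phi)$, and $y'(w)=1+O(w^2)$ for $|w|\le\phi$. So $Y'=\Theta(\sigma_*)$ and the higher derivatives are $O(\sigma_*)$. The center satisfies $|Y|\le\sigma_*|u|+\phi\lesssim\sigma_*(|u|+g)$, using $\phi=g\sigma_*$. Thus $\mu=O(g)$.

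\emph{Stage 3.} $W_{3,s}=e^{2s}W_{2,\phi}$. By the computation \eqref{eqn:w-ug}, this stays in $[0,3/4+o(1)]$, so $y'(w)$ and its derivatives up to order five are $\Theta(1)$ and $O(1)$ there. The derivative $W'_{3,s}=e^{2s}W'_{2,\phi}=\Theta(e^{2s}\sigma_*)$. Higher $u$-derivatives of $W_{3,s}$ are tiny multiples of the same scale, and the chain rule (Faà di Bruno) gives $Y^{[r]}=O(\sigma(s))$. For the location bound, $|Y|\le 2|W_{3,s}|\le Ce^{2s}\sigma_*(|u|+g)$, so $\mu=O(g)$. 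All supports lie in a fixed compact $K\subset(-1,1)$ by Proposition~\ref{prop:supp} (and trivially in stages 1 and 2). We also need $h\le\sigma$; this holds because $\sigma_*=\kappa\log M/(Mg)\gg h$ since $g=O(\sqrt{\log\log M})$, and $\sigma(s)\le 1/g\le1$.

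\textbf{Conclusion and main obstacle.} Plugging the scales into \eqref{eqn:dl-scaling} gives the three lines of \eqref{eqn:normbound3}:
\begin{itemize}
\item stage 1: $\sigma(s)^{-2}h^2=\sigma_0^{-2}h^2e^{4s}$;
\item stage 2: $\sigma_*^{-3}h^2$;
\item stage 3: $g\sigma(s)^{-2}h^2=g\sigma_*^{-2}h^2e^{-4s}$.
\end{itemize}
I expect the main obstacle to be stage 3, where the map is genuinely nonlinear at macroscopic $y$. There one must check carefully that the higher $u$-derivatives of $W_{3,s}$ are bounded by $O(e^{2s}\sigma_*)$ uniformly in $s\le s_2$, and not, say, by $e^{2s}\sigma_*\cdot g$. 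The point to verify is that each $u$-derivative of the Möbius map brings a factor $\sigma_*\ll1$ rather than an unbounded factor. This should follow from the explicit formula, since $W_{1,s_1}$ is linear in $\sigma_0u$ up to $O(\sigma_0^3u^3)$ corrections.
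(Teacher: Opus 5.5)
Your proposal is correct and follows the paper's proof essentially step for step. You verify \eqref{eqn:F4r} for $F_g$ and \eqref{eqn:Yd} for each $Y_{j,s}$ through the stereographic coordinate and Fa\`a di Bruno, read off $\mu$ using $\phi=g\sigma_*$, check $h\le\sigma\le1$ and a common compact $K$, and then substitute into \eqref{eqn:dl-scaling}. The only deviations are cosmetic: you treat the rotation stage as the M\"obius map on $w$ where the paper uses $R_t(y)=y\cos t+\sqrt{1-y^2}\sin t$, and you write $y'(w)=1+O(w^2)$ where it should be $2+O(w^2)$, which does not affect the $\Theta(\sigma_*)$ conclusion.
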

\begin{proof}
The proof involves using Lemma~\ref{lem:Y} and Lemma~\ref{lem:DL-approx}(ii) and estimating the derivatives of $Y_{j,s}$ to fill in the table \eqref{eqn:array}. 
Recall the classical flows $\Phi_0^s,\Phi_1^s$ from \eqref{eqn:phi-w0} and \eqref{eqn:phi-w}.
For TAT squeezing, recall also $w$ from \eqref{eqn:w}, and its inverse which we now denote by $J$,
\begin{align}
w(y)=\frac{y}{1+\sqrt{1-y^2}},\qquad J(w):=\frac{2w}{1+w^2},\quad J'(w)=\frac{2(1-w^2)}{(1+w^2)^2}.
\end{align}
Recall in particular that $w(\Phi_1^s(y))=e^{2s}w(y)$, i.e. $\Phi_1^s(y)=J(e^{2s}w(y))$, and also that $w(y)=y/2+O(y^3)$.
\begin{enumerate}[1.,leftmargin=*]
\item We start with the first TAT squeezing step. 
For $0\le s\le s_1$, as in Lemma~\ref{lem:Y} we have
\begin{align}
Y_{1,s}(u)&=\Phi_1^{-s}(\sigma_0u)=J(e^{-2s}w(\sigma_0u)),
\end{align}
and $f_s^{(1)}(Y_{1,s}(u))=[Y_{1,s}'(u)]^{-1/2}F_g(u)$ by Lemma~\ref{lem:Y}. 

To be able to apply Lemma~\ref{lem:DL-approx}(ii), we bound the derivatives of $F_g(u)$. Recall the initial packet is $f_{M,g}$, where $f_{M,g}(\sigma_0u)=\sigma_0^{-1/2}F_g(u)$ for $F_g(u)=\chi(u/g)G(u)$, $G(u)=\frac{e^{-u^2/4}}{(2\pi)^{1/4}}$, as in Section~\ref{subsec:g0}. Taking $r$ derivatives of $F_g$, every term in the resulting sum is of the form $g^{-j}\chi^{[j]}(u/g) G^{[r-j]}(u)$. Since $\chi$ is smooth with bounded derivatives, $G$ is fixed, and $g^{-j}\le1$ for large enough $M$, we have the starting bounds
\begin{align}\label{eqn:F4-start}
\sum_{r=0}^4\|F_g^{[r]}\|_2=O(1),\quad \sum_{r=0}^4\|uF_g^{[r]}\|_2=O(1),
\end{align}
which fulfills requirement \eqref{eqn:F4r}.

Now we bound the derivatives of $Y_{1,s}$.
Let $W_{1,s}(u):=e^{-2s}w(\sigma_0u)$, so $Y_{1,s}(u)=J(W_{1,s}(u))$. 
For $I=\{u:|u|\le bg\}$, we have $\sigma_0u=o(1)$. For $y=\sigma_0u=o(1)$, then $w'(y)=\frac12+O(y^2)$, and the higher order derivatives of $w$ are also bounded since $\sigma_0u$ stays away from $\pm1$.
Letting\footnote{Note that the scale $\sigma_1(s)=e^{-2s}\sigma_0$ agrees with the heuristics: For small $y$, we have $w(y)=y/2+O(y^3)$. Since the TAT squeezing sends $w\mapsto e^{-2s}w$, it then also sends $y$ to approximately $e^{-2s}y$ (cf. \eqref{eqn:phi-w}). So an initial packet of $y$-width $\sigma_0$ becomes a packet of $y$-width approximately $\sigma_1(s)=e^{-2s}\sigma_0$.} $\sigma_1(s):=e^{-2s}\sigma_0$, we can then estimate
\begin{align*}
W_{1,s}'(u)&=e^{-2s}\sigma_0(1/2+o(1))=\Theta(\sigma_1(s)),\\
W_{1,s}^{[r]}(u)&=e^{-2s}\sigma_0^rw^{[r]}(\sigma_0u)=O(\sigma_1(s)),\numberthis\\
W_{1,s}(u)&= e^{-2s}\left(\frac12(\sigma_0u)+O(\sigma_0^3u^3)\right)=O(\sigma_1(s)u)=o(1).
\end{align*}
Since $W_{1,s}(u)=o(1)$ stays away from $\pm1$, by taking derivatives of $J$, which are bounded, and using the chain rule for higher order derivatives (e.g. Fa\`a di Bruno's formula), we obtain the same estimates for $Y_{1,s}$ as well: 
\begin{align}\label{eqn:Yderiv}
Y_{1,s}'(u)=\Theta(\sigma_1(s)),\quad Y_{1,s}^{[r]}(u)=O(\sigma_1(s)),\quad\text{for $r=2,3,4,5$},
\end{align}
noting $J'$ is bounded away from $0$ for input bounded away from $\pm1$.
Additionally, since $|W_{1,s}(u)|\le C\sigma_1(s)u=o(1)$ and $|J(w)|\le2|w|$, we see $|Y_{1,s}(u)|\le C\sigma_1(s)|u|$ for $|u|\le bg$.
Taking $\sigma=\sigma_1(s)$ and $\mu=0$ gives the first row of the table \eqref{eqn:array}.

\item For the controlled $X$ rotation, it is enough to consider the positive $X$ rotation, since the negative $X$ rotation will have the same estimates. 
For the $X$ rotation, the scale $\sigma_*=\sigma_1(s_1)=\kappa\frac{\log M}{Mg}$ will be constant.
Letting $Y_*:=Y_{1,s_1}$, we have from \eqref{eqn:phi-w0},
\begin{align}
Y_{2,t}(u)&=\Phi_0^t(Y_{1,s_1}(u))=Y_*(u)\cos t+\sqrt{1-Y_*(u)^2}\sin t.
\end{align}
Let $R_t(y):=y\cos t+\sqrt{1-y^2}\sin t$, so that $Y_{2,t}(u)=R_t(Y_*(u))$ and $Y_{2,t}'(u)=R_t'(Y_*(u))Y_*'(u)$. 
From the end of the previous step with $s=s_1$, we see $|Y_*(u)|\le C\sigma_*|u|\le C\kappa\frac{\log M}{M}=C\phi$ for $|u|\le bg$.
For $|y|\le C\phi$ and $|t|\le \phi$, we have
\begin{align}
\begin{aligned}
R_t'(y)&=\cos t-\frac{y}{\sqrt{1-y^2}}\sin t=1+O(\phi^2)=\Theta(1),\\
R_t^{[r]}(y)&=O(1)\sin t=O(\phi),\quad r=2,3,4,5,
\end{aligned}
\end{align}
again using that $y$ stays away from $\pm1$ to obtain a uniform bound on the higher order derivatives of $\sqrt{1-y^2}$.
Then using $Y_{2,t}'(u)=R_t'(Y_*(u))Y_*'(u)$ and the chain rule for higher order derivatives, along with \eqref{eqn:Yderiv} with $s=s_1$ to estimate the derivatives of $Y_*=Y_{1,s}$, we obtain
\begin{align}\label{eqn:Yderiv2}
Y_{2,t}'=\Theta(\sigma_*),\quad Y_{2,t}^{[r]}(u)=O(\sigma_*),\;r=2,3,4,5.
\end{align}
Additionally,
\begin{align*}
|Y_{2,t}(u)|&\le |Y_*(u)|+|\sin t|\\
&\le C\sigma_*|u|+\phi \le C\sigma_*(|u|+g).\numberthis\label{eqn:Y2mu}
\end{align*}
This completes the second row of the table \eqref{eqn:array}.

\item For the final TAT squeezing, we consider the positive $X$ rotation from step 2. Let $Y_{**}:=Y_{2,\phi}$, so that 
\begin{align}
Y_{3,r}(u)=\Phi_1^r(Y_{**}(u))&=J(e^{2r}w(Y_{**}(u))).
\end{align}
Let $W_{3,r}(u):=e^{2r}w(Y_{**}(u))$ and $\sigma_3(r):=e^{2r}\sigma_*$.
From the proof of Proposition~\ref{prop:supp}, we know for $|u|\le bg$, the ending $w$-coordinate (and all earlier $w$-coordinates since $e^{2r}w$ is increasing in $r$) satisfies
\begin{align}
|W_{3,s_2}(u)|=|e^{2s_2}w(Y_{**}(u))|\le\frac34+o(1),
\end{align}
in particular it is bounded away from $\pm1$.
Using that $|Y_{**}(u)|\le C\sigma_*(|u|+g)\le C\phi=o(1)$, along with \eqref{eqn:Yderiv2}, we can estimate
\begin{align}
W_{3,r}'(u)&=e^{2r}w'(Y_{**}(u))Y_{**}'(u)= e^{2r}\left(\frac12+O(\phi^2)\right)\Theta(\sigma_*)=\Theta(\sigma_3(r)),
\end{align}
along with the higher-order derivative estimates
\begin{align}
W_{3,r}^{[k]}(u)&=e^{2r}O(\sigma_*)=O(\sigma_3(r)),
\end{align}
since the derivatives $w^{[k]}(y)$ are bounded for $y$ bounded away from $\pm1$.

Also since $W_{3,r}(u)$ stays bounded away from $\pm1$, considering derivatives of $J$ gives the same derivative estimates for $Y_{3,r}(u)=J(W_{3,r}(u))$. 
As $|J(w)|\le 2|w|$ and $|w(y)|\le |y|$, we also obtain, with \eqref{eqn:Y2mu},
\begin{align}
|Y_{3,r}(u)|\le 2|W_{3,r}(u)|= 2e^{2r}|w(Y_{**}(u))|\le 2e^{2r}|Y_{**}(u)|\le C\sigma_3(r)(|u|+g).
\end{align}
This completes the last row of the table \eqref{eqn:array}.

\end{enumerate}
Finally, we can check that $Y_{j,s}'(u)>0$ so each $Y_{j,s}$ is increasing. 
Also, recalling $h=2/(M+1)$, we see $h\le \sigma_j(s)\le 1$ for sufficiently large $M$, the smallest $\sigma_j(s)$ being $\sigma_2(s)=\sigma_*=\kappa\frac{\log M}{Mg}$.
Thus \eqref{eqn:normbound3} follows from Lemma~\ref{lem:DL-approx}(ii), noting that all three steps can use the same fixed compact $y$-interval $K\subset(-1,1)$, for sufficiently large $M$.
\end{proof}

\begin{proof}[Proof of Proposition~\ref{prop:dcbounds}]
Recall the parameter choices of $s_1,\phi,s_2$ from \eqref{eqn:svals}, and that $\sigma_*=\sigma_0e^{-2s_1}=\frac{\kappa\log M}{Mg}$.
Putting the bounds \eqref{eqn:normbound3} from Proposition~\ref{prop:tat-dl} into \eqref{eqn:int3} shows that 
\begin{align*}
\|e^{s_2D_{h,v_1}}e^{\pm\phi D_{h,v_0}}&e^{-s_1D_{h,v_1}}S_hf_{M,g}-S_he^{s_2L_{v_1}}e^{\pm\phi L_{v_0}}e^{-s_1L_{v_1}}f_{M,g}\|_{\ell^2} \\
&\le O(\sigma_0^{-2}h^2)\int_0^{s_1}e^{4s}\,ds +O(\sigma_*^{-3}h^2)\int_0^\phi ds + O(g\sigma_*^{-2}h^2)\int_0^{s_2}e^{-4s} \,ds\\
&\le O\left(\frac{g^2}{\kappa^2(\log M)^2}\right)+O\left(\frac{g^3}{\kappa^2(\log M)^2}\right)=O\left(\frac{g^3}{\kappa^2(\log M)^2}\right),\numberthis
\end{align*}
as in the heuristic argument in Section~\ref{subsec:heuristic}.
\end{proof}

\subsection{Finishing the proof of Theorem~\ref{thm:tat}}\label{subsec:tat-proof}
We can finally finish
\begin{proof}[Proof of Theorem~\ref{thm:tat}]
By Proposition~\ref{prop:supp}, we know that for sufficiently large $M$, the continuum evolution $T_M^\pm f_{M,g}$ is supported entirely in $\{y>1/3\}$ or $\{y<-1/3\}$.
Thus, recalling we work in the $Y$-Dicke basis, we have $\|(1-\Pi_{\pm,1/3})S_hT_M^\pm f_{M,g}\|_{\ell^2}=0$.

Applying Proposition~\ref{prop:gstate} on Gaussian state approximation, followed by the discrete-continuum bound Proposition~\ref{prop:dcbounds}, and finally the above observation, we obtain, for $g=\sqrt{8\log\log M}/a$,
\begin{align*}
\|(1-\Pi_{\pm,1/3})V_M^\pm|0^M\rangle_A\|_{\ell^2}&\le \|(1-\Pi_{\pm,1/3})V_M^\pm(S_hf_{M,g})\|_{\ell^2}+O\left(\frac{g^4}{M}\right)+O(e^{-a^2g^2/4})\\ 
&\le \|(1-\Pi_{\pm,1/3})S_hT_M^\pm f_{M,g}\|_{\ell^2}+O\left(\frac{g^3}{(\log M)^2}\right)+O\left(\frac{g^4}{M}\right)+O\left(\frac{1}{(\log M)^2}\right)\\
&=O\left(\frac{g^3}{(\log M)^2}\right),\numberthis
\end{align*}
which gives \eqref{eqn:ypoles}.
\end{proof}

\section{Proof of Theorem~\ref{thm:qsp} (QSP sgn approximation)}\label{sec:qsp-proof}

In this section, we prove Theorem~\ref{thm:qsp} on the QSP approximation involving the $\sgn$ function.

\begin{proof} 
Let $\eta=\Theta(\gamma)$ to be chosen later.
From \cite[Corollary 6, discussion before \S B]{low2017hamiltonian} or \cite[Lemma~25 of arXiv v1]{gilyen2019quantum} (see also \cite{eremenko2007uniform} for non-constructive existence), one can construct an odd polynomial $p$ of (odd) degree $d'=O(\frac{1}{\eta}\log(1/\varepsilon))$ such that $|p(x)|\le1$ on $[-1,1]$, and
\begin{align}\label{eqn:psign}
\max_{x\in[-1,-\eta]\cup[\eta,1]}|p(x)-\sgn(x)|\le \varepsilon.
\end{align}
We want to build this approximation $p$ into a unitary matrix $Q_\delta(x)$ which is close to $I_2$ for $x\in[\eta,1]$, and close to $iX$ for $x\in[-1,-\eta]$. 
We use the constructions in \cite{haah2019product}, which involve Laurent polynomials $F(z)$ which are $\SU(2)$-valued for $z=e^{i\theta}$ on the unit circle. 
Since any $\SU(2)$ matrix can be written uniquely in the form $aI+ibX+icY+idZ$ for $a,b,c,d\in\R$ with $a^2+b^2+c^2+d^2=1$, we will be interested in constructing Laurent polynomials $F(z)=a(z)I+ib(z)X+ic(z)Y+id(z)Z$, with $a(e^{i\theta})\approx \oneb_{(0,\pi)}(\theta)$ (corresponding to $x\in[\eta,1]$) and $b(e^{i\theta})\approx \oneb_{(-\pi,0)}(\theta)$ (corresponding to $x\in[-1,-\eta]$).

To do this, we will use the polynomial $p$ with \cite[Lemma 4]{haah2019product}, which, given Laurent polynomials $a,b$ satisfying certain conditions, provides existence of Laurent polynomials $c,d$ of at most the same degree so that $F(z)\in\SU(2)$.
To satisfy the conditions of \cite[Lemma 4]{haah2019product}, we first replace $p$ with $(1-\varepsilon)p$, so it is bounded strictly below 1. This adds at most another $\varepsilon$ to the error in \eqref{eqn:psign}. We can define a Laurent polynomial $q$ via $q(e^{i\theta})=p(\sin\theta)$ on the unit circle, which extends to $\C\setminus\{0\}$, as well as Laurent polynomials $a(z)$ and $b(z)$,  via
\begin{align}
q(z):=p\left(\frac{z-z^{-1}}{2i}\right),\quad a(z):=\frac12(1+q(z)),\quad b(z):=\frac12(1-q(z)).
\end{align}
We see $a^2+b^2$ is reciprocal ($f(z)=f(1/z)$) and $a^2+b^2\le 1-\varepsilon(1-\varepsilon/2)<1$ on the unit circle, so we can apply \cite[Lemma 4]{haah2019product}. This gives Laurent polynomials $c,d$ with degree at most $d'$ such that
\begin{align}
F(z)=a(z)I+ib(z)X+ic(z)Y+id(z)Z.
\end{align}
Moreover, 
\begin{align}
\begin{aligned}
\|F(e^{i\theta})-I\|^2&=(a(e^{i\theta})-1)^2+b(e^{i\theta})^2+c(e^{i\theta})^2+d(e^{i\theta})^2=1-q(e^{i\theta}),\\
\|F(e^{i\theta})-iX\|^2&=a(e^{i\theta})^2+(b(e^{i\theta})-1)^2+c(e^{i\theta})^2+d(e^{i\theta})^2=1+q(e^{i\theta}),
\end{aligned}
\end{align}
so in operator norm,
\begin{align}\label{eqn:F-approx}
\begin{aligned}
F(e^{i\theta})&=I+O(\sqrt{\varepsilon}),\text{ for }\sin\theta\in[\eta,1]\\ 
F(e^{i\theta})&=iX+O(\sqrt{\varepsilon}), \text{ for }\sin\theta\in[-1,-\eta].
\end{aligned}
\end{align}
Take $\varepsilon=c\delta^2$ so the error terms are $\le \delta$. With this choice, the degree of $F$ is $d'=O(\frac1\eta\log(1/\delta))$.

It remains to show that $F$ has the desired product decomposition in \eqref{eqn:Qdelta}.
For this, we apply \cite[Theorem 2]{haah2019product} to the Laurent polynomial $\tilde F(z):=F(z^2)$, which has degree $n:=\deg\tilde F\le 2d'$. Then $\tilde F$ is in Haah's class $E_{n}$ of degree $n$ Laurent polynomials $L(z)=\sum_{j=-n}^{j=n}C_jz^j$ for $C_j$ $2\times2$ complex matrices, which also satisfy: (1) $L(e^{i\theta})\in\SU(2)$ on the unit circle, and (2) $L$ only has terms $z^j$ for $j\in\{-n,-n+2,\ldots,n-2,n\}$.  Then \cite[Theorem 2]{haah2019product} gives the decomposition
\begin{align}
\tilde F(z)=F(z^2)=U_0\prod_{j=1}^{n}E_{P_j}(z),\quad\text{where } E_P(z):=zP+z^{-1}(I-P),
\end{align}
for some rank-1 projections $P_j$. For $z=e^{i\theta/2}$, we have $E_P(z)=e^{i\theta(2P-I)/2}$ by considering the action on the kernel and range of the projection $P$. Then letting $\theta=\pi x/2$, recalling \eqref{eqn:F-approx}, taking  $d=n\le 2d'$, $\eta=\sin(\pi\gamma/2)$, and
\begin{align}
Q_\delta(x):= F(e^{i\pi x/2}),
\end{align}
gives the theorem.
\end{proof}

\vspace{2mm}
\noindent
\textbf{Acknowledgments.}
This project used GPT-5.6 Sol Pro to come up with the protocol; GPT-5.6 Sol was also used for coming up with proof ideas and for general checking and proofreading. The paper was written by the authors and all results and proofs were checked and validated by the authors, who are fully responsible for the final content. L.S., T.U., T.C.M.\ and A.V.G.\ were supported in part by the DoE ASCR Quantum Testbed Pathfinder program (award No.~DE-SC0024220), NSF STAQ program, AFOSR MURI, NQVL:QSTD:Design:FTL, NSF QLCI (award No.~OMA-2120757), ARL (W911NF-24-2-0107), and ONR MURI. L.S., T.U., T.C.M.\ and A.V.G.\ also acknowledge support from the U.S.~Department of Energy, Office of Science, Accelerated Research in Quantum Computing, Fundamental Algorithmic Research toward Quantum Utility (FAR-Qu) and the U.S.~Department of Energy, Office of Science, National Quantum Information Science Research Centers, Quantum Systems Accelerator (award No.~DE-SCL0000121).

\bibliographystyle{amsalpha_edit}
\bibliography{ghz.bib}

\end{document}